\newcommand{\RR}{\mathbb{R}}
\newcommand{\factor}{\frac{1}{2}}
\newcommand{\logbase}{2}
\newcommand{\Elim}[2]{\ensuremath{\mathsf{Elim}^{#1}(#2)}}
\newcommand{\furthest}[1]{\ensuremath{\mathsf{furthest}(#1)}}
\newcommand{\processMessages}{\ensuremath{\mathsf{process\_messages}()}}
\newcommand{\initNode}{\ensuremath{\mathsf{init\_node}()}}
\newcommand{\valueMessageReady}[1]{\ensuremath{\mathsf{value\_message\_ready}(#1)}}
\newcommand{\reportMessageReady}[1]{\ensuremath{\mathsf{report\_message\_ready}(#1)}}
\newcommand{\exVal}[1]{\ensuremath{\mathsf{ex\_val}(#1)}}
\newcommand{\vote}[1]{\ensuremath{\mathsf{vote}\left(#1\right)}}
\newcommand{\conv}[1]{\ensuremath{\mathsf{conv}\left(#1\right)}}
\newcommand{\diam}[1]{\ensuremath{\mathsf{diam}\left(#1\right)}}
\newcommand{\dist}[2]{\ensuremath{\mathsf{dist}\left(#1,#2\right)}}
\newcommand{\values}[2]{\ensuremath{values^{#1}_{#2}}}
\newcommand{\reports}[2]{\ensuremath{reports^{#1}_{#2}}}
\newcommand{\protocolName}[0]{\ensuremath{\mathsf{MultiDimApproxAgree}}}
\newcommand{\dottedcolumn}[3]{%
  \settowidth{\dimen0}{$#1$}
  \settowidth{\dimen2}{$#2$}
  \ifdim\dimen2>\dimen0 \dimen0=\dimen2 \fi
  \begin{pmatrix}\,
    \vcenter{
      \kern.6ex
      \vbox to \dimexpr#1\normalbaselineskip-1.2ex{
        \hbox{$#2$}
    \kern3pt
    \xleaders\vbox{\hbox to \dimen0{\hss.\hss}\vskip4pt}\vfill
    \kern1pt
    \hbox{$#3$}
  }\kern.6ex}\,
  \end{pmatrix}
}
\begin{document}





\author{Maya Dotan}
\author{Gilad Stern}
\author{Aviv Zohar}
\affiliation{\institution{Hebrew University}\country{Israel}}

\ccsdesc{Theory of computation~Distributed algorithms}
\keywords{approximate agreement, higher dimension, asynchronous, byzantine faults, vector inputs
}

\title{Validated Byzantine Asynchronous Multidimensional Approximate Agreement}

\begin{abstract}
Consider an asynchronous system where each node begins with some point in $\RR^m$. Given some fixed $\epsilon > 0$, we wish to have every nonfaulty node eventually output a point in $\RR^m$, where all outputs are within distance $\epsilon$ of each other, and are within the convex hull of the original nonfaulty inputs. This problem, when some of the nodes are adversarial, is known as the ``Byzantine Asynchronous Multidimensional Approximate Agreement'' problem.  

Previous landmark work by Mendes \textit{et al.} and Vaidya \textit{et al.} presented two solutions to the problem. Both of these solutions require exponential computation by each node in each round.
Furthermore, the work provides a lower bound showing that it is impossible to solve the task of approximate agreement if $n\leq (m+2)t$, and thus the protocols assume that $n>(m+2)t$.

We present a Byzantine Asynchronous Multidimensional Approximate Agreement protocol in the \emph{validated} setting of Cachin \textit{et al.}
Our protocol terminates after a logarithmic number of rounds, and requires only polynomial computation in each round.
Furthermore, it is resilient to $t<\frac{n}{3}$ Byzantine nodes, which we prove to be optimal in the validated setting.
In other words, working on the task in the validated setting allows us to significantly improve on previous works in several significant metrics.
In addition, the techniques presented in this paper can easily yield a protocol in the original non-validated setting which requires exponential computation only in the first round, and polynomial computation in every subsequent round.
\end{abstract}


\maketitle
\newpage

\section{Introduction}

One of the elementary problems in the field of distributed computing is the consensus problem.
A natural alternative to the agreement property in an asynchronous network is a weaker \emph{approximate} agreement property, which bypasses the famed FLP \cite{FLP1985} impossibility result.
In this setting, all nonfaulty nodes are required to output values that are no more than $\epsilon$ apart, for some known $\epsilon$.
In order to avoid trivial solutions, all nonfaulty nodes must output some value in the convex hull of the inputs they received. 
Requiring outputs to be in the convex hull of inputs also yields desirable properties.
For example, several parties might want to agree on the value of some given commodity, i.e. output some number that represents its value.
If the parties were to use a regular consensus protocol, while starting with different evaluations for the commodity's value, they would be allowed to output any agreed upon value from the protocol.
This is true, even if they originally had very similar evaluations for the commodity's value.
Outputting a value in the convex hull of their inputs guarantees that their output remains ``meaningful'' by making sure that it is between the highest and lowest input values.
This notion can also be generalized to agreeing on vectors in higher dimensions.
For example, parties might want to allocate resources between several different avenues.
If that is the case, again we would naturally want any agreed upon allocation to be ``in between'' the originally suggested allocations, i.e. in its convex hull.
Furthermore, we might also want to impose an additional validity restriction stating that the agreed upon value is indeed a full allocation of the resources, giving all of the resources, but not more than the original amount.

In 1986, Dolev \textit{et al.} \cite{dolev1986synchronous}
presented an optimally resilient approximate agreement protocol in the presence of Byzantine faults in \emph{synchronous} networks.
In 2004, Abraham~\textit{et al.} \cite{abraham2004optimal} constructed an optimally resilient Approximate Agreement protocol in the presence of Byzantine faults in \emph{asynchronous} networks for scalar inputs.
A natural generalization is Approximate Agreement protocols for inputs that can be multidimensional, i.e. in $\RR^m$ for some value $m > 1$.
Mendes \textit{et al.} \cite{mendes2015combined} constructed two protocols solving Byzantine Asynchronous Approximate Agreement, and showed that no such protocol can exist if the number of Byzantine nodes is $t$ for $n\leq t\left(m+2\right)$. 
They generalize the steps shown in \cite{abraham2004optimal} to higher dimension in order to show a possibility result under the aforementioned constraint. 
They break each round in the algorithm into a step of data collection, including a ``witness technique'', and a step of calculating the input for the next round.
This general framework proves very useful. Our work uses similar tools, however the calculation of the input for the next round is  different. 
We combine a simple geometric primitive with an enhanced witness technique that guarantees inter-round consistency of votes. This in turn reduces the overall computation from exponential to polynomial in every round, while maintaining the correctness of votes. 

Mendes \textit{et al.}'s work uses the notion of a ``safe area'' to compute votes in future rounds.
Their work provides important insights into the problem,  but unfortunately the calculation of a ``safe area'' requires exponential computation. 
In~\cite{fugger2018fast}, F{\"u}gger \textit{et al.} present a protocol for Byzantine Asynchronous Multidimensional Approximate Agreement, which makes the communication and round complexity independent of the dimension $m$. 
Their work changes the voting rule but uses the same notion of a safe area, and thus the  computation remains exponential. 
In all of these works, as the dimension grows, the assumptions on the fraction of faulty nodes becomes more and more stringent.
The exponential computation and the assumption on the fraction of faulty nodes severely limit the use of those protocols in real-world applications.

Our main contributions in this paper are twofold.
First of all, we present the first Byzantine Asynchronous Multidimensional Approximate Agreement protocol in the validated setting of Cachin \textit{et al.} \cite{cachin2001validity}, which requires only polynomial computation, requires only a logarithmic number of rounds and is resilient to up to $t$ Byzantine nodes such that $n\geq 3t+1$. 
Note that this ratio is constant, and does not increase with the dimension $m$, forgoing the strict requirements on the number of faulty nodes implied by the lower bound of Mendes \textit{et al.}.
Secondly, we discuss how to use the ideas and techniques presented in the paper in order to construct a protocol in the non-validated setting of Mendes \textit{et al.}, which only requires polynomial computation in the first round and is fully polynomial afterwards.
This new algorithm is conceptually simple and computable in polynomial time, requiring only the repeated deletion of the furthest points in a set and computing the average of a set of points.

The validated problem is nearly identical to Byzantine Asynchronous Multidimensional Approximate Agreement, except that it is set in the practical ``validated'' setting. 
In this setting there is an additional assumption that there exists some ``external validity'' function which allows for nodes to check if a value is ``valid'' as an input. 
The nodes are then required to output a value in the convex hull of the ``valid'' inputs.
The validated setting suggested in~\cite{cachin2001validity} is very natural for real-world applications. Consensus tasks have been extensively researched in this validated setting \cite{abraham2021reaching,abraham2019vaba,AbrahamS20,cachin2001validity,DUMBO20}, which has proved to be useful and interesting. 
For example, we can think of the consensus protocol as being run on servers which receive signed inputs from clients, and only correctly signed inputs are considered ``valid''.
In particular, this setting exists in blockchain systems, where the clients' signing rights can be seen as an external validity function.
Note that in this formulation there is no notion of a ``nonfaulty input'', only an input which can be validated.
Another type of an external validity function can make sure that the input ``makes sense''.
Such a function can be formulated from the example of resource allocation shown above.
As stated there, when agreeing on the allocation of resources, we want to make sure that they were fully allocated, and that the total allocated amount isn't greater than the total amount of resources.
A fitting external validity function in this case would be to allow only values within the $m+1$-dimensional simplex.
This validity function may be cryptographic or otherwise, and our solution uses the function as a black-box, regardless of the choice of implementation.

The $\vote{}$ algorithm is easy to understand and implement. 
While simple, it turns out to be strong enough to replace the prohibitively expensive safe area calculations in the validated setting.
Even more surprising is the the fact that it is possible to use only a single round of safe area calculations and use the $\vote{}$ algorithm in the rest of the protocol, using the original round in order to guarantee validity.

Our protocol requires a number of rounds logarithmic in the diameter of the set of valid inputs and $\frac{1}{\epsilon}$, and each round consists of a constant number of broadcasts by each node. 
The computation performed in each round is polynomial.
This means the message and round complexity is similar to those of the protocol shown in the scalar case \cite{abraham2004optimal}, and importantly the complexity does not increase with the dimension $m$.
The general structure of the protocol is extremely similar to the protocol of \cite{abraham2004optimal}, similarly to the protocols of \cite{mendes2013multidimensional,vaidya2013vectorconcensus,mendes2015combined}, but it uses different voting and round approximation rules.
Note that the voting and round approximation rules can be replaced with any other rules that guarantees the same properties by just adjusting the way nodes check consistency with previous rounds.

As remarked above, without the use of cryptography, one possible implementation of external validity is using only one exponential time round of Mendes \textit{et al.}'s safe area calculation \cite{mendes2015combined} which works in the traditional setting, and then continuing with our polynomial-computation protocol.
In this solution, the votes are guaranteed to be in the convex hull of nonfaulty inputs after the first round, and since our protocol validates inter-round consistency of the votes, they will continue being in the convex hull of the nonfaulty inputs throughout the rest of the protocol.
This means our protocol with slight adjustments can be used in the traditional setting, yielding a fully polynomial solution in all rounds except for the first (exponential) one.

\section{Model and Basic Definitions}

As stated above we will solve the Validate Byzantine Asynchronous Multidimensional $\epsilon$-Agreement problem.
In our model, there are $n$ nodes with point-to-point channels between every pair.
The nodes can send messages to each other, which are guaranteed to arrive, but can arrive after any finite time.
The adversary can control up to $t<\frac{n}{3}$ nodes, causing them to arbitrarily deviate from the protocol.
Classically, Byzantine Asynchronous $\epsilon$-Agreement is defined as follows:
\begin{definition}
A protocol for Byzantine Asynchronous $\epsilon$-Agreement over $\RR^m$ has the following properties:
\begin{itemize}
    \item \textbf{Termination.} All nonfaulty nodes  complete the protocol. 
    \item \textbf{Correctness.} Two nonfaulty nodes $i$ and $j$ that complete the protocol output $y_i,y_j\in\RR^m$ such that $\dist{y_i}{y_j}\leq \epsilon$.
    \item \textbf{Validity.} Let $x_i$ be node $i$'s input and let $I$ be the set of nonfaulty nodes. The output $y_j$ of a nonfaulty node $j$ is in $\conv{\left\{x_i|i\in I\right\}}$.
\end{itemize}
\end{definition}

Following ideas presented in \cite{cachin2001validity} and the model described in \cite{abraham2019vaba} we assume there exists some external function $\exVal$, which all nonfaulty nodes can run as a black box.
The purpose of the function is to determine whether a value is a valid input.
In general, we say that a value $v$ is a valid input if $\exVal{v}=true$.
We think of a setting in which all nodes, including the Byzantine ones, receive valid inputs in $\RR^m$.
Assuming a function $\exVal$, we define Validated Byzantine Asynchronous $\epsilon$-Agreement:
\begin{definition}
A protocol for Validated Byzantine Asynchronous $\epsilon$-Agreement over $\RR^m$ has the following properties:

\begin{itemize}
    \item \textbf{Termination.} Assume that nonfaulty nodes receive only valid inputs. Then all nonfaulty nodes complete the protocol. 
    \item \textbf{Correctness.} Two nonfaulty nodes $i$ and $j$ that complete the protocol output $y_i,y_j\in\RR^m$ such that $\dist{y_i}{y_j}\leq \epsilon$.
    \item \textbf{Validity.} The output $y_j$ of a nonfaulty node $j$ is in $\conv{\{v\in\RR^m|\exVal{v}= true\}}$.
\end{itemize}
\end{definition}

In exact Byzantine Agreement protocols, a Validity property which states that all nonfaulty nodes output some value $v$ such that $\exVal{v}=true$ still allows for trivial solutions, by checking all values and outputting some valid value deterministically.
However, exhaustively searching over all valid values in $\RR^m$ is impractical so the simple Validity property proposed above is enough to rule out trivial solutions in our setting.
Note that if the external validity function can be an arbitrary function $\mathsf{ex\_val}:\RR^m\to \{true,false\}$, then having black-box access to it does not allow for practical ways to find a predefined valid value.  

The first part of the paper will contain some basic definitions, and the analysis of the basic procedures we will use throughout this work. 
In broad strokes, throughout each round of the protocol each node aggregates the points it has seen, and calculates a new point called a ``vote'' to send in the next round.
A vote calculated using these procedures lies in the convex hull of the aggregated points which is crucial for solving the problem of Validated Approximate Agreement.
In the initialization round, nodes eliminate extreme points before calculating a vote for the next round of the protocol which lies in the convex hull of the remaining points.
Later in the section, we prove some useful properties of these algorithms.
We will later use these properties to show that the diameter of the set of nonfaulty votes shrinks exponentially fast.

\subsection{Basic Definitions and Algorithms}

\begin{definition}[Diameter]
Let $A$ be a finite set of points. We define the \textit{diameter} of $A$ as: 
$$\diam{A} := \max \{ \dist{x}{y} | x,y \in A \}\;.$$
Where the dist function is simply the Euclidean distance
\end{definition}

\begin{definition}[Convex Hull]
Let $A\in\RR^n$ be some set. The \textbf{convex hull} of $A$ is the set of all possible finite convex combinations of points in $A$. We denote the convex hull of $A$ by $\conv{A}$. So:
$$\conv{A} = \{ \sum_{i=1}^k \alpha_i a_i\mid k\in[\mathbb{N}] \wedge \sum_{i=1}^k \alpha_i = 1 \wedge \forall i\in[k] \textrm{ } a_i\in A \wedge \forall i\in[k] \textrm{ }  0\le \alpha_i \le 1\}\;.$$
\end{definition}

\begin{definition}\label{def::furthest}
    Let $A$ be a finite multiset of points in $\RR^m$. We denote: $\furthest{A}=argmax_{x,y\in A}\left\{\left\|x-y\right\|\right\}$ 
    I.e. $furthest\left(A\right)$ is the maximal distance pair in $A$.
\end{definition}

Note that \furthest{A} is defined to be a pair of points with maximal distance in $A$.
We assume that there is a deterministic tie-breaking rule for ordering points by distance (for instance, the lexicographic tie-braking rule on the pair of points). Hence from now on when we say ``Maximal distance pair'' we mean maximal distance according to Euclidean distance and lexicographic tie-breaking, and $\furthest{A}$ is defined to use that same tie-breaking rule.

For a set of points $A$ we define the procedures \Elim{t}{A} and \vote{A}. The first, \Elim{t}{A} takes a set of points $A$ in $\RR^n$ and iteratively removes the two maximal-distance pair of points in $A$ $t$ times.
This can be done by computing the distance between each pair of points, sorting them, and then going over the sorted list and repeatedly removing pairs from which neither point has previously been deleted until $t$ pairs are deleted.
\Elim{t}{A}, presented in \cref{alg:elim}, is used in the initialization round in order to help nodes estimate the distance between points they have seen in this round and votes of all nodes in the next round. The second procedure, \vote{A}, which simply computes the average of all points in $A$.
In each round other than the initialization round nodes collect information from other nodes, and then compute their votes for next round using the \vote{} procedure presented in  \cref{alg:vote}.

\begin{algorithm}
\caption{$\Elim{t}{A}$}\label{alg:elim}
\hspace*{\algorithmicindent} \textbf{Input: A set of points $A$} \\
\hspace*{\algorithmicindent} \textbf{Output: A new set of points $A'$} 
\begin{algorithmic}[1]
\Procedure{$\Elim{t}{A}$}{}
\For{$i\gets 1,\ldots,t$}
    \State{$A\gets A\setminus \furthest{A}$}
\EndFor
\State \textbf{return} $A$
\EndProcedure
\end{algorithmic}
\end{algorithm}

\begin{algorithm}
\caption{\vote{A}}\label{alg:vote}
\hspace*{\algorithmicindent} \textbf{Input: A multiset of points $A$} \\
\hspace*{\algorithmicindent} \textbf{Output: A point $v$ in $\conv{A}$} 
\begin{algorithmic}[1]
\Procedure{\vote{A}}{}
\State \textbf{return}
$\frac{\sum_{v\in A}v}{\left|A\right|}$
\EndProcedure
\end{algorithmic}
\end{algorithm}



\subsection{Properties of the Vote Procedure}\label{sec::voteProperties}
As stated above, in each round nodes send each  other votes and check that they're consistent with votes from previous rounds.
Nodes only accept votes if they're found to be consistent.
In addition, in the initialization round of the protocol, nodes share information in order to approximate the distance between points in the next round.
Each node must be able to bound that distance only using the information it sees in the initialization round. 

In this subsection we show that the algorithms presented in the previous subsection uphold the following properties: (1) If non-faulty nodes have a ``large enough'' intersection in the initialization round, then they are able to bound the distance between any two points in the next round, and calculate the number of rounds that they need to run in order to guarantee validated approximate agreement. (2) When nodes check that votes have been computed according to \vote{}, the diameter of accepted values shrinks by a constant factor in each round, and (3) all non-faulty nodes always vote inside the convex hull of the initial validated set.

This implies that nodes only need to go through a logarithmic number of rounds (in the diameter of the original votes and $\frac{1}{\epsilon}$) to ensure that all of the votes are at most $\epsilon$-distance apart.
Intuitively, all nonfaulty nodes need to make sure that each vote was calculated based on a ``large enough'' set, and that every other vote is based on a set with a ``large enough'' intersection (the exact definitions of both conditions are stated in the relevant lemmas).
We show that if these two conditions hold, the desired multiplicative shrinking takes place.

In all of the lemmas and claims in this section we assume that $n\geq 3t+1$. 

The first lemma shows that non-faulty nodes (who validated inter-round consistency) always vote inside the convex hull of the nodes validated in the initialization round.
\begin{lemma}\label{lemma::convexhull}
    Let $P$ be a multiset of points in $\RR^m$ such that $\left|P\right|\leq n$.
    Let $U\subseteq P$ such that $\left|U\right|\geq n-t$.
    Then $\vote{U}\in \conv{P}$ and $\vote{\Elim{t}{U}}\in \conv{P}$. 
\end{lemma}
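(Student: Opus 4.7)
The plan is to observe that $\vote{\cdot}$ simply produces the arithmetic mean of its input, which is a specific convex combination (all weights equal to $1/|A|$). Thus any average of a subset of $P$ is automatically a point in $\conv{P}$. The proof therefore reduces to two checks: that the relevant multiset is nonempty (so the average is well defined), and that it is a subset of $P$.

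For the first claim, $\vote{U}$ is well defined because $|U| \geq n-t \geq 2t+1 \geq 1$, and since $U \subseteq P$ by hypothesis, we can write
\[
\vote{U} \;=\; \sum_{v \in U} \tfrac{1}{|U|}\, v,
\]
which is a convex combination of points of $P$ (extending the sum to $P$ with zero weights on elements of $P\setminus U$ if needed), hence an element of $\conv{P}$.

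For the second claim, I first need to bound $|\Elim{t}{U}|$ from below. Each iteration of \Elim{t}{U} deletes the two endpoints of a \furthest{\cdot} pair, so after $t$ iterations exactly $2t$ points are removed; hence
\[
|\Elim{t}{U}| \;\geq\; |U| - 2t \;\geq\; (n-t) - 2t \;=\; n - 3t \;\geq\; 1
\]
using the assumption $n \geq 3t+1$. In particular $\Elim{t}{U}$ is nonempty, so $\vote{\Elim{t}{U}}$ is well defined. Since $\Elim{t}{U} \subseteq U \subseteq P$, the same averaging argument as before exhibits $\vote{\Elim{t}{U}}$ as a convex combination of points of $P$, so it lies in $\conv{P}$.

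There is essentially no obstacle here, the statement is a bookkeeping lemma: the only thing one must be careful about is verifying nonemptiness of $\Elim{t}{U}$ using the resilience assumption $n \geq 3t+1$, since otherwise $\vote{\Elim{t}{U}}$ would not be defined. Everything else is immediate from the observation that the arithmetic mean of a subset of $P$ is a convex combination of elements of $P$.
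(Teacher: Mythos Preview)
Your proof is correct and follows essentially the same approach as the paper's: both observe that $\vote{A}$ is the arithmetic mean and hence lies in $\conv{A}$, and then use $\Elim{t}{U}\subseteq U\subseteq P$ together with monotonicity of $\conv{\cdot}$. You are slightly more careful than the paper in explicitly verifying that $\Elim{t}{U}$ is nonempty via $n\geq 3t+1$, which is a welcome addition but not a different route.
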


\begin{proof}
    Remember that $\vote{U}=\sum_{u\in U} \frac{1}{\left|U\right|}\cdot u\in \conv{U}$ and that $U\subseteq P$.
    In addition, $\Elim{t}{U}\subseteq U$, and thus $\conv{\Elim{t}{U}}\subseteq \conv{U}$.
    As before, $\vote{\Elim{t}{U}}\in \conv{\Elim{t}{U}}\subseteq \conv{U}$
    The claim follows.
\end{proof}

The following lemma is used in bounding the distance between votes after the initialization round.
In the lemma, $U,V$ are to be thought of as the votes that two nodes receive before completing the initialization round.
If a node completes the initialization round while having seen the votes in $V$, it will compute a vote inside the convex hull of $\Elim{t}{V}$ for the next round.
In this context, the lemma shows that a node receiving the votes in $U$ can bound the distance between any of the votes it has seen and the vote of other nodes in the first round, which lie within $\Elim{t}{}$ of the votes they have seen.
This also allows the node to bound the distance between any two nodes' votes in the next round.

\begin{lemma}\label{lemma::elimIntersection}
    Let $U,V$ be multisets of points in $\RR^m$ such that $\left|U\cup V\right|\leq n$ and $\left|U\cap V\right|\geq n-t$,
    then $\diam{U}\geq \diam{\Elim{t}{V}}$ and $U\cap\Elim{t}{V}\neq\emptyset$.
\end{lemma}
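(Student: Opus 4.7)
The plan is to handle the two conclusions separately, using straightforward cardinality bounds for the intersection claim and a pigeonhole case analysis on the eliminated pairs for the diameter claim. I would first fix notation: write $R$ for the multiset of $2t$ points removed by the $t$ iterations of \Elim{t}{V}, partitioned into pairs $\{p_i,q_i\}$ with $\{p_i,q_i\} = \furthest{V_i}$, where $V_0 = V$ and $V_{i+1} = V_i \setminus \{p_i,q_i\}$. Also, from $|U \cap V| \geq n-t$ and $|U \cup V| \leq n$, I would derive $|V \setminus U| = |V| - |U \cap V| \leq n - (n-t) = t$.

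The non-emptiness of $U \cap \Elim{t}{V}$ is then immediate: since $\Elim{t}{V} = V \setminus R$ with $|R| = 2t$, we have $U \cap \Elim{t}{V} = (U \cap V) \setminus R$, so $|U \cap \Elim{t}{V}| \geq |U \cap V| - |R| \geq (n-t) - 2t = n - 3t \geq 1$ by the assumption $n \geq 3t+1$.

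For the diameter bound, the key observation is that diameters only shrink along the elimination, so $\dist{p_i}{q_i} = \diam{V_i} \geq \diam{V_t} = \diam{\Elim{t}{V}}$ for every $i$. Now set $c_i = |\{p_i,q_i\} \cap U| \in \{0,1,2\}$; since at most $|V \setminus U| \leq t$ of the $2t$ eliminated points lie outside $U$, we have $\sum_i c_i \geq t$. If some $c_i = 2$, then both $p_i,q_i \in U$, so $\diam{U} \geq \dist{p_i}{q_i} \geq \diam{\Elim{t}{V}}$ and we are done. Otherwise every $c_i \leq 1$, and combined with $\sum_i c_i \geq t$ over $t$ terms this forces $c_i = 1$ for every $i$. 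In that case the $t$ points $\{p_i,q_i\} \setminus U$ are $t$ distinct elements of $V \setminus U$, which together with $|V \setminus U| \leq t$ gives $V \setminus U \subseteq R$. Hence $\Elim{t}{V} = V \setminus R \subseteq V \cap U \subseteq U$, and $\diam{\Elim{t}{V}} \leq \diam{U}$ follows trivially.

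The main obstacle I expect is the second case above: it is tempting to try to always produce a single pair that is entirely inside $U$, but that need not hold (an adversary could arrange the removals so that each pair straddles $V \setminus U$). The trick is to notice that this very configuration, by exhausting the quota $|V \setminus U| \leq t$, forces the survivors $\Elim{t}{V}$ to sit inside $U$, so the diameter inequality comes for free from set inclusion rather than from matching individual pairs.
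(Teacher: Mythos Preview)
Your proposal is correct and follows essentially the same approach as the paper: a two-case analysis on whether some eliminated pair lies entirely in $U$, with the ``otherwise'' branch forcing $\Elim{t}{V}\subseteq U$ via the bound $|V\setminus U|\leq t$, and a simple cardinality subtraction for the non-emptiness claim. Your intermediate deduction that each $c_i=1$ in the second case is a slight detour (the paper directly observes that at least $t$ elements of $V\setminus U$ are removed, hence all of them), but this does not affect correctness.
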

\begin{proof}
    First note that $\left|V\cap U\right|\geq n-t$ and thus $\left|U\right|,\left|V\right|\geq n-t\geq 2t+1$, meaning that $\Elim{t}{V}$ is well defined and is the result of removing $t$ pairs of points from $V$.
    For the first part of the lemma, for every $i\in[t]$, denote $(p_i,q_i)$ to be the pair deleted in the $i$'th iteration of $\Elim{t}{V}$. 
    For every such $i$, let $d_i=\dist{p_i}{q_i}$.
    First of all, $d_1\geq d_2\geq\ldots\geq d_t\geq \diam{\Elim{t}{V}}$, because in iteration $i$ the furthest distance pair is removed, leaving only points whose distance is no greater than $d_i$.
    If for any $i\in[t]$, $p_i,q_i\in U$, then $\diam{U}\geq \dist{p_i}{q_i}\geq \diam{\Elim{t}{V}}$.
    Otherwise, for every $i\in[t]$, either $p_i\notin U$ or $q_i\notin U$.
    This means that in total at least $t$ different points from $V\setminus U$ have been deleted from $V$ throughout $\Elim{t}{V}$.
    By assumption $n\geq \left|U\cup V\right|=\left|U\cap V\right|+\left|U\setminus V\right|+\left|V\setminus U\right|\geq n-t +\left|V\setminus U\right|$, and thus $t\geq \left|V\setminus U\right|$.
    In other words, all points from $V\setminus U$ have been deleted from $V$ throughout the process of $\Elim{t}{V}$, leaving only points from $V\cap U$.
    Therefore, in this case $\Elim{t}{V}\subseteq U\cap V\subseteq U$, and thus $\diam{\Elim{t}{V}}\leq \diam{U}$.
    For the second part of the lemma, by assumption $\left|U\cap V\right|\geq n-t\geq 2t+1$.
    Only $2t$ points are removed from $V$ during $\Elim{t}{V}$, so $U\cap\Elim{t}{V}\neq \emptyset$ as well.
\end{proof}

The next two lemmas are used to show that the diameter of all accepted values shrinks by a factor of $2$ in each round.
This means that a logarithmic number of rounds is required to reach the desired diameter.
\cref{lemma::addingVote} is a technical lemma only to be used in the next lemma and shows that voting according to the set $A$ or the set $A$ with $A$ added to it yield the same vote.
In \cref{lemma::closeVotes}, read $P$ as the set of all accepted votes in a given round, and $U,V$ as two sets of votes collected by two nodes.
In this context, the lemma shows that any two nodes calculating votes according to sets with a large enough intersection will have close votes in the next round.
In the protocol, nodes guarantee that any two accepted votes are computed with respect to sets with a large intersection, meaning that the diameter of all accepted votes shrinks by a constant factor in each round.

\begin{restatable}{lemma}{addingVote}\label{lemma::addingVote}
    Let $A$ be a multiset of points in $\RR^m$, let $l\in \mathbb{N}$ and let $v=\vote{A}$.
    Define $B$ to be the multiset $A$ with the value $v$ added $l$ times.
    Then $\vote{A}=\vote{B}$.
\end{restatable}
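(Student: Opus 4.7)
The plan is a direct algebraic calculation from the definition of $\vote{}$. Set $k=|A|$ and $v=\vote{A}=\frac{1}{k}\sum_{a\in A}a$, so by definition $\sum_{a\in A}a = k\cdot v$. The multiset $B$ has cardinality $k+l$ since it is $A$ with $l$ additional copies of $v$ inserted, and its sum is
\[
\sum_{b\in B}b \;=\; \sum_{a\in A} a \;+\; l\cdot v \;=\; k\cdot v + l\cdot v \;=\; (k+l)\, v.
\]
Dividing by $|B|=k+l$ gives $\vote{B}=v=\vote{A}$, which is what we need.

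There is no real obstacle here; the only thing to be a little careful about is that $A$ is a multiset rather than a set, so adding $l$ copies of $v$ genuinely increases the cardinality by $l$ (even if $v$ happens to coincide with points already in $A$). Once that is noted, the claim follows in one line from the definition of \vote{} in \cref{alg:vote}. I would present the proof as a short paragraph with the displayed equation above and a concluding sentence.
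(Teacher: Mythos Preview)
Your proof is correct and essentially identical to the paper's: both set $k=|A|$, note $|B|=k+l$, and compute $\vote{B}$ directly from the definition by splitting the sum over $B$ into the sum over $A$ plus $l\cdot v$. The only cosmetic difference is that you first rewrite $\sum_{a\in A}a=k\cdot v$ and then add $l\cdot v$, whereas the paper keeps the sum in the form $\frac{1}{k}\sum_{u\in A}u$ and factors out $(k+l)$ at the end.
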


\begin{restatable}{lemma}{closeVotes}\label{lemma::closeVotes}
    Let $P$ be a multiset of points in $\RR^m$ such that $\left|P\right|\leq n$.
    Let $U,V\subseteq P$ such that $\left|U\cap V\right|\geq n-t$.
    Then $\dist{\vote{U}}{\vote{V}}\leq \frac{1}{2} \diam{P}$.
\end{restatable}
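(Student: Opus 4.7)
The plan is to write $\vote{U}-\vote{V}$ as a signed linear combination of points of $P$ that splits naturally into a scalar multiple $\lambda$ of a difference of two points in $\conv{P}$, and then bound $\lambda$ by $\tfrac{1}{2}$. Concretely, I would begin by expanding
\[
\vote{U}-\vote{V}=\sum_{x}\left(\tfrac{m_U(x)}{|U|}-\tfrac{m_V(x)}{|V|}\right)x,
\]
where $m_U(x),m_V(x)$ denote the multiplicities of $x$ in the multisets $U$ and $V$. These coefficients sum to $0$. I would then assume without loss of generality that $|U|\le |V|$ and regroup by decomposing $m_U=m_\cap+m_U'$ and $m_V=m_\cap+m_V'$ with $m_\cap(x)=\min\{m_U(x),m_V(x)\}$. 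With this decomposition the contributions from $U\cap V$ and $U\setminus V$ carry non-negative coefficients while those from $V\setminus U$ carry non-positive coefficients.

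Next I would compute the total positive weight explicitly. A short calculation using $|U\cap V|+|U\setminus V|=|U|$ shows that it equals
\[
\lambda \;:=\; \frac{|V\setminus U|}{|V|},
\]
and since the sum of all coefficients is zero, the total magnitude of the negative weights is also $\lambda$. Hence I can write $\vote{U}-\vote{V}=\lambda(p^{+}-p^{-})$, where $p^{+}$ and $p^{-}$ are convex combinations (with weights of total mass $\lambda$, normalized) of points in $U\cup V\subseteq P$; in particular $p^{\pm}\in \conv{P}$.

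The last step is to bound $\lambda$. From $|U\cap V|\ge n-t$ and $|V|\le |P|\le n$ we obtain $|V\setminus U|=|V|-|U\cap V|\le t$, and $|V|\ge |U\cap V|\ge n-t\ge 2t+1$ since $n\ge 3t+1$. Therefore $\lambda\le \tfrac{t}{2t+1}\le \tfrac{1}{2}$. Combined with the standard fact that $\diam{\conv{P}}=\diam{P}$ for finite $P$, I get
\[
\dist{\vote{U}}{\vote{V}}=\lambda\,\|p^{+}-p^{-}\|\le \lambda\cdot\diam{P}\le \tfrac{1}{2}\diam{P}.
\]

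The main subtlety I expect is the multiset bookkeeping: checking that the decomposition by multiplicities is consistent, that the positive and negative coefficients really do sum to the same quantity $\lambda$, and that the symmetric case $|V|\le |U|$ gives the symmetric bound $\lambda=|U\setminus V|/|U|$, which satisfies the same inequality. Everything else reduces to the identity $\sum_x(m_U(x)/|U|-m_V(x)/|V|)=0$ and the convexity argument above.
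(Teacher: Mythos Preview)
Your argument is correct, and it takes a genuinely different route from the paper. The paper's proof pads $U$ and $V$ with copies of their own averages to make the two multisets the same size (this is precisely what Lemma~\ref{lemma::addingVote} is for), so that the intersection cancels termwise; then it applies the triangle inequality to the $k+l\le t$ remaining pairs, each contributing at most $\diam{P}$, over a common denominator $|U'|\ge n-t$. Your approach instead works directly with the signed coefficients $m_U(x)/|U|-m_V(x)/|V|$, splits them into nonnegative and nonpositive parts via the $m_\cap$ decomposition, and extracts the common mass $\lambda=|V\setminus U|/|V|$. Both arguments land on the same final estimate $\lambda\le t/(n-t)<\tfrac12$, but yours avoids the auxiliary padding lemma and the somewhat artificial equalization of sizes; the paper's version, on the other hand, makes the pairwise cancellation very explicit and is perhaps easier to verify line by line without tracking signs. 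One minor wording point: your sentence about ``contributions from $U\cap V$ and $U\setminus V$'' is really a statement about the three multiplicity pieces $m_\cap(1/|U|-1/|V|)$, $m_U'/|U|$, $-m_V'/|V|$ rather than about set membership per se, since a single support point can contribute to both $p^{+}$ and $p^{-}$; your total-weight computation is nonetheless correct as written.
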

The proof of \cref{lemma::addingVote,lemma::closeVotes} is provided in the appendix.

\section{Validated Byzantine Asynchronous Multidimensional Approximate Agreement} \label{section::main}
In this section we take the algorithms from above and construct a Validated Byzantine Asynchronous Multidimensional Approximate Agreement protocol with them. 
We begin by formally defining the setting of the problem, and then follow with an explicit algorithm for the solution. 
The protocol follows the general framework of the AAD protocol described in \cite{abraham2004optimal}, which has also proven to be useful in Mendes \textit{et al.}'s work \cite{mendes2015combined}.
In this framework, the protocol is divided into 3 main conceptual parts.

The first part is an initialization and round-estimation protocol.
In this protocol, all nodes collect other nodes' initial data, estimate how many rounds the protocol must run in order to reach $\epsilon$-agreement, and then output a vote to suggest in the next vote.
In our protocol, the initialization and round-estimation takes place in Algorithm~\ref{alg:init}, which is called in line~\ref{line:init} of Algorithm~\ref{alg::mainProtocol}.
After initialization, all nodes participate in a loop consisting of the next two parts.
First of all, in the loop each node broadcasts its current vote, and collects votes using an idea called a witness technique.
The witness technique is a simple two-round protocol in which all nodes first broadcast votes, wait to receive $n-t$ votes from other nodes, and then broadcast a report of the $n-t$ votes they collected.
Once the nodes see that $n-t$ of the reports they received contain all of the values they received, they complete the witness technique.
The protocol guarantees that every pair of nonfaulty nodes that complete it have seen at least $n-t$ common values.

The main job of the loop in line~\ref{line:mainLoop} is to execute the witness technique in each round, with adjustments made to check that values are consistent with information from previous rounds.
Finally, after completing the witness technique in each round, every node computes its vote for next round using our voting rule and starts the next round.
In our protocol, this stage takes place in line~\ref{line:calcVote} of Algorithm~\ref{alg::mainProtocol}.
In the non-validated setting it is crucial that this voting rule outputs a point in the convex hull of the values received from nonfaulty nodes.
This means that if the faulty nodes send values outside the convex hull of the nonfaulty values, these values need to somehow be ignored.
Ideally, the voting rule is one such that if all nonfaulty nodes use it to calculate their votes, then the diameter of the convex hull of nonfaulty votes shrinks by a constant multiplicative factor in each round.
If that is the case, the exponential shrinking yields a logarithmic round requirement.

The protocol presented in this work slightly adjusts the general framework of the AAD protocol.
In the initialization round of the protocol, nodes check whether the suggested values are externally valid.
Afterwards, our protocol employs a witness technique similar to the one in AAD, with the added functionality that nodes also check that the values sent in a given round are consistent with values received in previous rounds.
More precisely, when sending a vote, nodes must also provide the set of values and reports which were used to compute the vote in the previous round.
Before accepting any such message, nodes check that each vote was computed correctly, and recursively check that the values and reports from previous rounds are correct.
This allows nodes to guarantee that all of the votes are consistent between rounds.
In addition, whereas in the previous protocol only nonfaulty nodes must compute votes based on sets with a large intersection, this adjustment requires faulty nodes to also do so, or have their votes rejected.
This process makes sure that only ``valid'' values are accepted in late rounds, and thus can be thought of as an external validity function for those rounds. 
This stronger witness technique could be used in additional settings, and we discuss this idea in Section~\ref{sec::conclusions}.
Finally, the voting rule in our protocol is the \vote{} algorithm described in previous sections.
The algorithm satisfies the desired property of multiplicative shrinkage between rounds as shown in Lemma~\ref{lemma::shrinkingDiameter}.
In the first round external validity is verified.
In later rounds inter-round consistency is verified, which guarantees that votes are within the convex hull of the previous round's votes.
Overall this means that votes are guaranteed to be inside the convex hull of the externally-valid values in every round.
Algorithm~\ref{alg::mainProtocol} describes the exact behavior of each node. 
In our solution we assume the existence of broadcast channels for each node.
Each node can send a broadcast accompanied by a tag.
\begin{definition}
    A broadcast channel has the following properties:
    \begin{itemize}
        \item \textbf{Validity.} A nonfaulty node $i$ receives a broadcast from a nonfaulty node $j$ with a given tag, if and only if $j$ sent that broadcast with that tag.
        \item \textbf{Liveness.} If some nonfaulty node $i$ receives a broadcast from node $j$ with a given tag, every nonfaulty node eventually receives that broadcast from $j$ with that tag.
        \item \textbf{Uniqueness.} If two nonfaulty nodes receive two messages $m$, $m'$ from the same node with the same tag, then $m=m'$. 
    \end{itemize}
\end{definition}
In our protocol, each message's tag is comprised of the current round number and the type of message.
The type of message (e.g. a ``value'' message) is the first element of each broadcast.
These channels can be simulated using the information theoretically secure Reliable Broadcast protocol described in \cite{bracha1987broadcast}.

\begin{algorithm}[ht]\caption{\protocolName($v^0, \epsilon$)}\label{alg::mainProtocol}
\hspace*{\algorithmicindent} \textbf{Code for node $i$} \\
\hspace*{\algorithmicindent} \textbf{Input: A value $v^0\in \RR^m$, precision $\epsilon$.} \\
\hspace*{\algorithmicindent} \textbf{Output: A value $v\in \RR^m$ in the convex hull of the valid inputs.} 

\begin{algorithmic}[1]
    \State \textbf{global} $r\gets 0$
    \State \textbf{global} $termination\_times\gets\emptyset$ \Comment{all of the initializations are of multisets}
    \State \textbf{global} $waiting\_values\gets\emptyset$ \Comment{value messages waiting to be processed}
    \State \textbf{global} $waiting\_reports\gets\emptyset$ \Comment{report messages waiting to be processed}
    \State \textbf{global} $\forall r \geq 0\  \values{r}{}\gets\emptyset$ \Comment{values accepted from other nodes}
    \State \textbf{global} $\forall r \geq 0\  \reports{r}{}\gets\emptyset$ \Comment{sets of values other nodes reported seeing}
    \State \textbf{global} $halt\gets \infty$ \Comment{number of rounds to execute} 
    \State continually run $process\_messages$ in the background
    \State \textbf{call} $\initNode$ \label{line:init}
    \State $r\gets 1$
    
    \State broadcast $\left(``value", v^1,\values{0}{},\reports{0}{},1\right)$
    \While{$r < halt$} \label{line:mainLoop}
        \For {$M = \left(``value",v^s,rec\_vals^{s-1},rec\_reps^{s-1},s\right) \in waiting\_val\_messages$}
            \If{$\valueMessageReady{M}$}\label{line::valueCheck}
                \State $\values{s}{}.add(v^s)$
                \Comment{do also for messages from round $s< r$}
                \If {$s = r$ and $\left|\values{r}{}\right| = n-t$}\label{line::enoughValues} \Comment{do only if the message is from round $r$}
                    \State broadcast $\left(``report",\values{r}{},r\right)$
                \EndIf
            \EndIf
        \EndFor
        \For {$M = \left(``report",rec\_vals^s,s\right) \in waiting\_reports$}
            \If{\reportMessageReady{M}}\label{line::reportCheck}
                \State $reports^{s}.add(rec\_vals^s)$
                \If {$s = r$ and $\left|reports^r\right| = n-t$}
                    \State $r\gets r+1$
                    \State $v^r\gets\vote{\values{r-1}{}}$\label{line:calcVote}
                    \State broadcast $\left(``value", v^r,\values{r-1}{},\reports{r-1}{},r\right)$
                \EndIf
            \EndIf
        \EndFor
    \EndWhile
    \State output $v^r$ and terminate
\end{algorithmic}
\end{algorithm}

\begin{algorithm}[ht]\caption{$\processMessages$}\label{alg::messages}

\begin{algorithmic}[1]
    \Upon{receiving $\left(``init\_value",v\right)$ from node $j$} 
        \If{$\exVal{v}=true$}
            \State $\values{0}{}.add(v)$
            \If{$\left|\values{0}{}\right|= n-t$}
                \State broadcast $\left(``report",\values{0}{},0\right)$
            \EndIf
        \EndIf
    \EndUpon
    \Upon{receiving $\left(``enough",e\right)$ from $j$}
        \State $termination\_times.add(e)$
        \If{$\left|termination\_times\right|\geq n-t$}
            \State set $halt$ to be the $t+1$'th smallest value in $termination\_times$
        \EndIf
    \EndUpon
    \Upon{receiving $M=\left(``value",v^{s},rec\_vals^{s-1},rec\_reps^{s-1},s\right)$ from $j$}
        \State $waiting\_values.add(M)$
    \EndUpon
    \Upon{receiving $M=\left(``report",rec\_vals^s,s\right)$ from $j$}
        \State $waiting\_reports.add(M)$
    \EndUpon
\end{algorithmic}
\end{algorithm}

\begin{algorithm}[ht]\caption{$\valueMessageReady{M=\left(``value",v^s,rec\_vals^{s-1},rec\_reps^{s-1},s\right)}$}\label{alg::valueCheck}
\begin{algorithmic}[1]
    \MultiIf 
        \State $r\geq s$, \Comment{relevant round reached}
        \State $\left|rec\_vals^{s-1}\right|\geq n-t$, \Comment{enough values were sent}
        \State $\left|rec\_reps^{s-1}\right|\geq n-t$, \Comment{enough reports were sent}
        \State $\forall vals\in rec\_reps^{s-1},\ vals\subseteq rec\_vals^{s-1}$, \Comment{reports are also in values}
        \State $rec\_vals^{s-1}\subseteq \values{s-1}{}$,  \Comment{$i$ accepted the sent values}
        \State $rec\_reps^{s-1}\subseteq \reports{s-1}{}$, \Comment{$i$ accepted the sent reports}
    \MultiThen
        \If{$s=1$ and $v^{s}_j\in conv (\Elim{t}{rec\_vals^{s-1}})$}
            \State return $true$
        \ElsIf{$s>1$ and $v^{s}_j= \vote{rec\_vals^{s-1}}$}
            \State return $true$
        \EndIf
    \EndMultiIf
    \State return $false$
\end{algorithmic}
\end{algorithm}

\begin{algorithm}[ht]\caption{$\reportMessageReady{M=\left(``report",rec\_vals^s,s\right)}$}
\begin{algorithmic}[1]
    \MultiIf 
        \State $r\geq s$, \Comment{relevant round reached}
        \State $\left|rec\_vals^{s}\right|\geq n-t$, \Comment{enough values were sent}
        \State $rec\_vals^{s}\subseteq \values{s}{}$. \Comment{$i$ accepted the sent values}
    \MultiThen
        \State return $true$
    \MultiElse
        \State return $false$
    \EndMultiIf
\end{algorithmic}
\end{algorithm}

\begin{algorithm}[ht]\caption{$\initNode$}\label{alg:init}
\begin{algorithmic}[1]
    \State broadcast $\left(``init\_value",v^0\right)$
    \While{$\left|termination\_times\right|< n-t$ and $\left|reports^0\right|< n-t$}
        \For {$M = \left(``report",rec\_vals^0,0\right) \in waiting\_reports$}
            \If{$\reportMessageReady{M}$}\label{line::initReportCheck}
                \State $reports^{0}.add(rec\_vals^0)$
                \If {$\left|reports^0\right| = n-t$}
                    \State  $v^1\gets\vote{\Elim{t}{\values{0}{}}}$
                    \State $enough\gets \left\lceil \log_{\logbase}\left(\nicefrac{3\diam{\values{0}{}}}{\epsilon}\right)\right\rceil+1$ \label{line::enoughComputation}
                    \State broadcast $\left(``enough", enough\right)$
                \EndIf
            \EndIf
        \EndFor
    \EndWhile
\end{algorithmic}
\end{algorithm}

We will start by setting some conventions which will help us discuss the protocol.

\begin{definition}
    We say that node $i$ is ``in round $s$'' if its local $r$ variable equals $s$.
    Furthermore, we say that node $i$ ``reached round $s$'' if at any point throughout the protocol its local $r$ variable equals $s$. Similarly, we say that node $i$ ``completed round $s$'' if at any point throughout the protocol its local $r$ variable is greater than $s$.
\end{definition}
    
\begin{definition}
    For every variable $x$ defined in the described algorithms, we define $x_i$ to be node $i$'s $x$ variable.
    For example, $termination\_times_i$ is node $i$'s $termination\_times$ set.
\end{definition}

From this point on, our main goal will be to prove this main theorem:
\begin{restatable}{theorem}{mainTheorem}
$\protocolName(\cdot, \epsilon)$ is a Validated Byzantine Asynchronous $\epsilon$-Agreement protocol over $\RR^m$ resilient to up to $t\leq\frac{n-1}{3}$ Byzantine nodes.
\end{restatable}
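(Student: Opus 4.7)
The plan is to establish the three required properties — Termination, Validity, and Correctness — separately, leveraging the structural lemmas of \cref{sec::voteProperties}. Throughout, I will work with the multiset $P^r$ of all round-$r$ values that are ever accepted into some honest node's $\values{r}{}$; using broadcast uniqueness, each element can be indexed by its sender, so $|P^r| \le n$, which is exactly the cardinality hypothesis the shrinking lemmas require.

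Validity is proven by induction on $r$. At $r = 0$, the explicit check $\exVal{v} = true$ forces $P^0 \subseteq V := \{v \mid \exVal{v} = true\}$. For $r = 1$, $\valueMessageReady$ accepts $v^1$ only if $v^1 \in \conv{\Elim{t}{rec\_vals^0}}$ with $rec\_vals^0 \subseteq P^0$, so \cref{lemma::convexhull} gives $v^1 \in \conv{P^0} \subseteq \conv{V}$. For $r \ge 2$, the check forces $v^r = \vote{rec\_vals^{r-1}}$ with $rec\_vals^{r-1} \subseteq P^{r-1}$, and $P^{r-1} \subseteq \conv{V}$ by induction, so $v^r \in \conv{V}$ again via \cref{lemma::convexhull}.

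Termination decomposes into three sub-claims: every honest node finishes $\initNode$; the local $halt$ eventually becomes finite at every honest node; and each main-loop round eventually completes. The first follows from reliable broadcast delivery of all $n-t$ honest $init\_value$ messages, which triggers the report broadcasts that populate $\reports{0}{}$ to size $n-t$. The second holds because once at least $n-t$ honest $enough$ messages arrive, $termination\_times$ contains at most $t$ byzantine entries, so the $(t+1)$-th smallest is sandwiched between the smallest and largest honest $enough$ values, both finite. The third is an induction on $r$: once all honest nodes reach round $r$, the $\valueMessageReady$ and $\reportMessageReady$ checks on honest messages eventually pass, since the numerical conditions hold by construction and the containment conditions $rec\_vals^{s-1} \subseteq \values{s-1}{}$ and $rec\_reps^{s-1} \subseteq \reports{s-1}{}$ are eventually satisfied by broadcast liveness and the recipient's recursive processing of round-$(s-1)$ messages.

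Correctness is the technical core and it reduces to geometrically shrinking $\diam{P^r}$. The enhanced witness technique is pivotal: each accepted value message carries $\ge n-t$ reports, every report lists $\ge n-t$ values contained in its $rec\_vals^{r-1}$, and two such messages' report lists — both subsets of any common honest recipient's $\reports{r-1}{}$ of size $\le n$ — share at least $t + 1$ common reports, each of which contributes $\ge n-t$ values to $rec\_vals^{r-1}_i \cap rec\_vals^{r-1}_j$. For $r \ge 2$, \cref{lemma::closeVotes} then immediately gives $\diam{P^r} \le \tfrac{1}{2}\diam{P^{r-1}}$. For $r = 1$, the same report-level intersection between any honest $j$'s init-completion set $V^0_j$ and a possibly byzantine $rec\_vals^0_i$ yields $|V^0_j \cap rec\_vals^0_i| \ge n-t$; \cref{lemma::elimIntersection} then furnishes a common point $p \in V^0_j \cap \Elim{t}{rec\_vals^0_i}$, and the triangle-inequality walk $v^1_i \to p \to p' \to v^1_{i'}$ through $V^0_j$ yields $\diam{P^1} \le 3\diam{V^0_j}$. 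Combining these with $halt \ge h_{\min} \ge \lceil \log_2(3\diam{V^0_{j^*}}/\epsilon) \rceil + 1$, where $j^*$ attains the smallest honest $enough$ in $termination\_times$ (per line~\ref{line::enoughComputation}), we conclude $\diam{P^{halt}} \le (1/2)^{halt-1} \cdot 3\diam{V^0_{j^*}} \le \epsilon$. The hardest step is precisely this round-$1$ bound: it requires showing that even byzantine $v^1$'s, validated only against some honest recipient's eventual state, must share a large report-level intersection with $j^*$'s init-time $V^0_{j^*}$, and the factor of $3$ baked into line~\ref{line::enoughComputation} is exactly the triangle-inequality walk's cost.
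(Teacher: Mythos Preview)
Your proposal is essentially correct and follows the paper's own route: Validity by induction on $r$ using \cref{lemma::convexhull}; Correctness via the witness-technique report intersection feeding \cref{lemma::closeVotes} for $r\ge 2$ and \cref{lemma::elimIntersection} with a three-leg triangle-inequality walk for $r=1$ (matching the factor~$3$ in line~\ref{line::enoughComputation}); and Termination via the $\initNode$ and round-progress lemmas together with the $(t+1)$-th order statistic of $termination\_times$.

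There is one small gap in your Termination argument. Your third sub-claim, ``each main-loop round eventually completes,'' is proved by an induction whose step assumes all honest nodes are still active in round $r$. But once the first honest node $j$ terminates at some round $s$ (because $s\ge halt_j$), it no longer broadcasts its round-$s$ report or any later messages, so your induction does not carry past round~$s$. The paper closes this gap differently: it first shows (by contradiction, using exactly your induction) that \emph{some} honest node terminates; then, letting $s$ be the earliest such round, all honest nodes reach round~$s$ by the induction. They need not \emph{complete} round~$s$; instead, broadcast liveness eventually delivers to every honest~$i$ all the $``enough"$ messages $j$ had, forcing $halt_i\le halt_j\le s$, so $i$ also exits the while loop at round~$s$. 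Your argument becomes complete once you replace ``each round completes'' with ``each round completes \emph{or} some honest node terminates during it,'' and then handle the latter case as above.
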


This algorithm is optimally resilient and a sketch for a proof of the lower bound can be found in section~\ref{sec::lowerBound} of the appendix. In order to prove our main theorem, we will prove several lemmas about the properties of the protocol.

\subsection{Liveness Lemmas}
The first set of lemmas shows that in each round some progress is made.
In broad strokes, the lemmas show that no nonfaulty node gets stuck: at any point it time it either eventually completes the round it is in, or completes the protocol.
The ideas in the protocol are very similar to the ones in Abraham \textit{et al.} \cite{abraham2004optimal}, with adjustments made for the externally valid setting.
Therefore, the proofs also follow similar logic, and are provided in the appendix.

\begin{restatable}[Eventual Viewpoint Consistency]{lemma}{setProgress}\label{lemma::setProgress}
    Let $i,j$ be two nonfaulty nodes that reach round $r$ of the protocol. Then either (1) from some point on, for every $s\leq r$, $\values{s}{j}\subseteq \values{s}{i}$ and $\reports{s}{j}\subseteq \reports{s}{i}$ \textbf{or} (2) $i$ eventually completes the protocol and terminates.
\end{restatable}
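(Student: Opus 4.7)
The plan is to prove the disjunction directly: I would assume $i$ does not complete the protocol (otherwise case~(2) holds immediately) and then derive case~(1). Under this assumption $i$ keeps executing the main loop indefinitely, so its two waiting queues continue to be drained. The strategy is then to show by strong induction on $s \in \{0,1,\ldots,r\}$ that from some time onward both $\values{s}{j} \subseteq \values{s}{i}$ and $\reports{s}{j} \subseteq \reports{s}{i}$.

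For the base case $s=0$, each $v \in \values{0}{j}$ was added by $j$ after receiving an $(``init\_value",v)$ broadcast and passing the $\exVal{v}$ check. Broadcast liveness forces the same broadcast to arrive at $i$, and since $\exVal{}$ is deterministic $i$ passes the identical check and inserts $v$ into $\values{0}{i}$. For reports, any $rec\_vals^0 \in \reports{0}{j}$ was accepted only after $j$ verified $rec\_vals^0 \subseteq \values{0}{j}$; once the value inclusion has stabilized (well-defined because $\values{0}{j}$ has at most $n$ elements by broadcast uniqueness), $i$'s own $\reportMessageReady{}$ check succeeds and the report enters $\reports{0}{i}$.

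For the inductive step $s \geq 1$, I would pick any $v \in \values{s}{j}$ and trace it to the accepting message $M = (``value",v,rec\_vals^{s-1},rec\_reps^{s-1},s)$. The acceptance conditions at $j$ enforce $rec\_vals^{s-1} \subseteq \values{s-1}{j}$ and $rec\_reps^{s-1} \subseteq \reports{s-1}{j}$, which the induction hypothesis eventually propagates into $\values{s-1}{i}$ and $\reports{s-1}{i}$. Broadcast liveness delivers $M$ to $i$, and since $i$ reached round $r \geq s$ its local round variable is eventually at least $s$ and stays there by monotonicity. The remaining checks in $\valueMessageReady{M}$ depend only on the message contents, so they succeed for $i$ too and $v$ is inserted into $\values{s}{i}$. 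The argument for $\reports{s}{j}$ is analogous, one layer shallower, and uses only the inductive value inclusion at level $s$ together with broadcast liveness.

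The main obstacle is simply keeping the ``eventually'' quantifiers straight: each round's inclusion must stabilize before the next round's argument fires, and during that interval $i$ must still be alive and processing. Finiteness of each per-round set (bounded by $n$ via broadcast uniqueness), monotonicity of $r_i$, $\values{s}{i}$, and $\reports{s}{i}$, together with the non-termination assumption, make the stabilization times well-defined and finite; if non-termination fails we simply fall into the escape clause~(2), which is why the lemma is stated as a disjunction.
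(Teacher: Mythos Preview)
Your proposal is correct and follows essentially the same approach as the paper: assume $i$ does not terminate, then induct on $s$, using broadcast Liveness/Uniqueness and the determinism of the acceptance checks (with the induction hypothesis supplying the round-$(s-1)$ inclusions needed for $\valueMessageReady{}$). Your explicit attention to stabilization times and the boundedness of the per-round sets is a bit more careful than the paper's write-up, but the structure and key ideas are identical.
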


\begin{restatable}[Init Termination]
{lemma}{initTermination}\label{lemma::initTermination}
    If all nonfaulty nodes participate in the protocol and have valid inputs, then they all complete $init\_node$.
\end{restatable}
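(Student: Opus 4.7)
\textbf{Proof Plan for Lemma~\ref{lemma::initTermination}.} The plan is to show that for every nonfaulty node $i$, the exit condition $|reports^0_i| \geq n-t$ of the \initNode{} loop eventually holds, regardless of the behavior of the Byzantine nodes and regardless of whether the alternative exit condition $|termination\_times_i| \geq n-t$ ever fires. The strategy is to identify at least $n-t$ ``report'' broadcasts whose \reportMessageReady{} check is guaranteed to eventually return \emph{true} at every nonfaulty node. Since there are at least $n-t$ nonfaulty nodes, each of which broadcasts exactly one such report if it ever reaches that point in the code, it suffices to show that every nonfaulty node broadcasts a report and that every nonfaulty report is eventually accepted everywhere.

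First I would argue that every nonfaulty node eventually broadcasts a $(``report",\values{0}{},0)$ message. By assumption every nonfaulty node has a valid input and invokes \initNode{}, so it broadcasts $(``init\_value",v^0)$. By broadcast liveness and validity, every nonfaulty node $k$ eventually receives these $n-t$ broadcasts; since each $v^0$ is valid, it gets added to $\values{0}{k}$ (adversarial contributions can only make $\values{0}{k}$ grow faster, never smaller). Thus $|\values{0}{k}|$ eventually reaches $n-t$, triggering the broadcast of a report with contents equal to $\values{0}{k}$ at that instant; this report contains only values that passed \exVal{}.

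Next I would show that every report broadcast by a nonfaulty node $j$ is eventually added to $\reports{0}{i}$ of every nonfaulty $i$. Let $R_j$ denote the contents of $j$'s report, so $|R_j|=n-t$ and every $v\in R_j$ is valid. Each such $v$ was added to $\values{0}{j}$ because $j$ received a broadcast $(``init\_value",v)$ from some node; by broadcast liveness this same broadcast is eventually delivered to $i$, and since $v$ is valid it gets added to $\values{0}{i}$. Hence eventually $R_j \subseteq \values{0}{i}$, and since $|R_j|\geq n-t$ and $s=0\leq r_i$, all three conjuncts of \reportMessageReady{} become permanently satisfied. The next time $i$'s \initNode{} loop iterates (which it does as long as it has not exited), the report sits in $waiting\_reports_i$, passes the check, and is inserted into $\reports{0}{i}$. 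Repeating this argument for the $n-t$ nonfaulty reports, $i$ eventually reaches $|\reports{0}{i}|\geq n-t$, at which point the \emph{while} guard fails and $i$ exits \initNode{}.

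The main subtlety is ensuring that a nonfaulty node cannot be permanently stuck inside the \emph{while} loop despite the above guarantees. Observe that the loop body itself is non-blocking: it simply scans $waiting\_reports_i$ and applies a local predicate. Because \processMessages{} runs in the background and places every delivered report into $waiting\_reports_i$, and because the predicate \reportMessageReady{} is monotone (once true, it remains true forever, since $\values{0}{i}$ only grows), once the conditions for a given nonfaulty report are met the report will be accepted on the very next iteration of the loop. Hence no node can spin forever with a pending acceptable report, completing the proof.
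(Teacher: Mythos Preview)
Your proof is correct and follows essentially the same approach as the paper. The only differences are that the paper invokes Lemma~\ref{lemma::setProgress} where you inline the broadcast-liveness argument for round~$0$ directly, and the paper additionally argues that the ``enough'' messages propagate so that \emph{both} exit conditions of the \textsf{while} loop eventually hold, whereas you correctly observe that satisfying the $\lvert reports^0_i\rvert\geq n-t$ condition alone already forces the loop to exit.
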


\begin{restatable}[Iterations Eventually Terminate]{lemma}{roundProgress}\label{lemma::roundProgress}
    If all nonfaulty nodes participate in the protocol, reach round $r$ of the protocol for $r\geq 1$, and none of them terminate during it, then they all complete the round.
\end{restatable}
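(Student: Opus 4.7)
The plan is to track the two milestones a nonfaulty node $i$ must hit in round $r$ to advance: first accumulating $n-t$ accepted round-$r$ value messages (which triggers its report broadcast on line~\ref{line::enoughValues} of \cref{alg::mainProtocol}), and then accumulating $n-t$ accepted round-$r$ report messages (which triggers the $r \gets r+1$ update). Fix such an $i$; by hypothesis $i$ never halts during round $r$, so $r < halt_i$ throughout, keeping it inside the main while loop. The key tool will be \cref{lemma::setProgress} (Eventual Viewpoint Consistency), which under the no-termination assumption guarantees that, for every other nonfaulty $j$, eventually $\values{s}{j}\subseteq\values{s}{i}$ and $\reports{s}{j}\subseteq\reports{s}{i}$ for all $s\leq r$.

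For the first milestone, I would take each of the $\geq n-t$ nonfaulty $j$'s and observe that upon reaching round $r$ each such $j$ broadcasts $M_j=(``value",v^r_j,\values{r-1}{j},\reports{r-1}{j},r)$ where the enclosed sets are snapshots of $j$'s state at the moment of that broadcast, and $v^r_j=\vote{\values{r-1}{j}}$ (or a point inside $\conv{\Elim{t}{\cdot}}$ when $r=1$). Because $j$ itself previously passed all relevant checks, the size and subset-structure conditions inside \valueMessageReady (sizes $\geq n-t$ and reports contained in values) are already baked into the snapshot. The only non-local conditions $i$ needs are $rec\_vals^{r-1}\subseteq\values{r-1}{i}$ and $rec\_reps^{r-1}\subseteq\reports{r-1}{i}$. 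Since the snapshots are subsets of $j$'s later, larger $\values{r-1}{j},\reports{r-1}{j}$, and those are eventually contained in $i$'s corresponding sets by \cref{lemma::setProgress}, $i$ eventually accepts $M_j$. Thus $\values{r}{i}$ eventually contains at least $n-t$ values, and $i$ broadcasts $(``report",\values{r}{i},r)$.

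For the second milestone, I would apply the same argument to every nonfaulty node to conclude each nonfaulty $j$ eventually broadcasts $M'_j=(``report",\values{r}{j},r)$ with $|\values{r}{j}|\geq n-t$; the outstanding condition in \reportMessageReady is $\values{r}{j}\subseteq\values{r}{i}$, which again follows from monotonicity of $j$'s local sets and \cref{lemma::setProgress}. So $|\reports{r}{i}|$ eventually reaches $n-t$ and $i$ advances to round $r+1$. The main obstacle is keeping the bookkeeping straight around snapshots: the sets embedded in a received message were fixed at broadcast time, possibly before the guarantee of \cref{lemma::setProgress} takes effect; I would handle this by explicitly noting the monotonicity $rec\_vals^{r-1}\subseteq\values{r-1}{j}$ (and similarly for reports) at every later time, so combining monotonicity with the lemma yields the containment inside $i$'s view. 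A secondary technicality is that the argument must be applied symmetrically and simultaneously to all nonfaulty nodes without circularity, which works here because at every step we only invoke the lemma for sets whose index $s\leq r$, and because the triggers in round $r$ depend only on round $r-1$ data plus round-$r$ broadcasts already guaranteed to exist.
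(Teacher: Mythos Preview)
Your proposal is correct and follows essentially the same route as the paper's proof: both arguments split into the two milestones (enough accepted \emph{value} messages so that $i$ broadcasts a report, then enough accepted \emph{report} messages so that $i$ advances), and both rely on \cref{lemma::setProgress} together with the fact that the sets $j$ embeds in its broadcasts already satisfy the local size and containment checks of \valueMessageReady/\reportMessageReady. Your explicit mention of the $r=1$ branch and of the snapshot-monotonicity bookkeeping is a slight refinement over the paper's write-up, but not a different approach.
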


\subsection{Shrinking Diameter Lemmas}
The lemmas in this section show that the diameter of the set of all points that nonfaulty nodes receive and add to their respective \values{}{} sets shrinks by a constant multiplicative in each round.
Our strategy will be to use the results from Section~\ref{sec::voteProperties}, and show that the local sets that all nonfaulty nodes meet the conditions of the claims we proved there, which allows us to use \cref{lemma::closeVotes}.
The first lemma shows that using the witness technique guarantees that nodes only accept votes which were computed according to sets with a large intersection.

\begin{lemma}[Reported Viewpoints of Accepted Values Are Close]\label{lemma::valueIntersection}
    Let $i,j$ be two nonfaulty nodes and observe some round $r\geq 1$.
    If node $i$ added $v_p$ to $\values{r}{i}$ as a result of receiving $\left(``value",v_p,rec\_vals_p,rec\_reps_p,r\right)$ from $p$, and node $j$ added $v_q$ to $\values{r}{j}$ as a result of receiving  $\left(``value",v_q,rec\_vals_q,rec\_reps_q,r\right)$ from $q$, then $\left|rec\_vals_p\cap rec\_vals_q\right|\geq n-t$.
\end{lemma}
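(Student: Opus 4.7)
The plan is to use a standard quorum-intersection argument on the report sets carried inside the two value messages, combined with broadcast uniqueness.

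First I would unpack the conditions enforced by \valueMessageReady{} on both accepted messages. Because $i$ added $v_p$ to $\values{r}{i}$, the check in line~\ref{line::valueCheck} of Algorithm~\ref{alg::mainProtocol} succeeded, so in particular $\left|rec\_reps_p\right|\geq n-t$, every element $vals\in rec\_reps_p$ satisfies $vals\subseteq rec\_vals_p$, and $rec\_reps_p\subseteq \reports{r-1}{i}$. The symmetric statement holds for $j$ and the message from $q$. Next I would observe that entries of $\reports{r-1}{}$ originate from distinct broadcast channels: each report carries the tag $(``report",r-1)$, and by broadcast uniqueness each node in the system can contribute at most one such message to any nonfaulty node's $\reports{r-1}{}$. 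Hence the elements of $rec\_reps_p$ can be indexed by the $\geq n-t$ distinct senders whose reports they came from, and likewise for $rec\_reps_q$.

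The core of the argument is then a quorum-intersection step. The sender sets of $rec\_reps_p$ and $rec\_reps_q$ are both subsets of the $n$ nodes and each has size at least $n-t$, so by inclusion-exclusion they share at least $2(n-t)-n=n-2t\geq t+1\geq 1$ common senders. Pick any common sender $k$ and let $vals_k^i$ be the report from $k$ appearing in $rec\_reps_p$ (stored inside $\reports{r-1}{i}$) and $vals_k^j$ be the report from $k$ appearing in $rec\_reps_q$ (stored inside $\reports{r-1}{j}$). By broadcast uniqueness applied to $k$'s round-$(r-1)$ report, we must have $vals_k^i=vals_k^j$; call the common value $vals_k$.

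Finally, I would put the pieces together. Since each $vals\in rec\_reps_p$ satisfies $vals\subseteq rec\_vals_p$, we have $vals_k\subseteq rec\_vals_p$, and similarly $vals_k\subseteq rec\_vals_q$, so $vals_k\subseteq rec\_vals_p\cap rec\_vals_q$. The report $vals_k$ was accepted into both $\reports{r-1}{i}$ and $\reports{r-1}{j}$, which by the first clause of \reportMessageReady{} requires $|vals_k|\geq n-t$. Therefore $|rec\_vals_p\cap rec\_vals_q|\geq |vals_k|\geq n-t$, as claimed. The only subtlety I would be careful about is the bookkeeping that lets me treat each report in a $rec\_reps$ set as being associated with a unique sender; once that identification is made, the rest is an off-the-shelf quorum-intersection plus uniqueness argument.
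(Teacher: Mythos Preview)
Your proof is correct and follows essentially the same approach as the paper: both argue that $rec\_reps_p$ and $rec\_reps_q$ each have at least $n-t$ elements drawn from at most $n$ possible round-$(r-1)$ reports (one per sender, by broadcast uniqueness), apply quorum intersection to find a common report $S$, and conclude from $S\subseteq rec\_vals_p\cap rec\_vals_q$ and $|S|\geq n-t$. Your explicit indexing by senders is a slightly cleaner way to handle the bookkeeping than the paper's direct cardinality bound on $rec\_reps_p\cup rec\_reps_q$, but the argument is the same.
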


\begin{proof}
    Nodes $i$ and $j$ added $v_p$ and $v_q$ to $\values{r}{i}$ and $\values{r}{j}$ respectively, so they received 
    
    $(``value",v_p,rec\_vals_p,rec\_reps_p,r)$ and $\left(``value",v_q,rec\_vals_q,rec\_reps_q,r\right)$ messages respectively.
    Since $i,j$ added those points we know that eventually
    $rec\_reps_p\subseteq \reports{r-1}{i}$ and $rec\_reps_q\subseteq \reports{r-1}{j}$ and that $\left|rec\_reps_p\right|\geq n-t$, $\left|rec\_reps_q\right|\geq n-t$.
    Every element $S\in \reports{r-1}{i}$ was added as a result of receiving a ``report'' message from a unique node $l$ for that round, after checking that $\left|S\right|\geq n-t$.
    Since no node can send two ``report'' messages per round, this means that the total number of values in $rec\_reps_p\cup rec\_reps_q$ is no greater than $n$.
    Now, $2\left(n-t\right)\geq n+t+1$, so $rec\_reps_p\cap rec\_reps_q\neq\emptyset$.
    Observe some set $S\in rec\_reps_p\cap rec\_reps_q$.
    Seeing as $i,j$ added $v_p$ and $v_q$ to their \values{}{} sets, it must also be the case that
    $S\subseteq rec\_vals_p \subseteq \values{r-1}{i}$ and that  $S\subseteq rec\_vals_q \subseteq \values{r-1}{j}$.
    As shown above, every such $S$ is of size at least $n-t$, so $\left|rec\_vals_p\cap rec\_vals_q\right|\geq n-t$, because they both contain the set $S$. 
\end{proof}

\begin{lemma}[Shrinking Diameter]\label{lemma::shrinkingDiameter}
    Let $r\geq 1$ be some round number, and let $G$ be the set of all nonfaulty nodes.
    Observe the sets \values{r}{i} after the nonfaulty nodes stop adding elements to them.\footnote{Each \values{}{} set can contain no more than $n$ elements, so such a point much exist.}
    Also, observe the set $\values{r}{}=\bigcup_{i\in G} \values{r}{i}$, consisting of all points the nonfaulty nodes ever accept in round $r$. Then:
    $$\diam{\values{r+1}{}}\leq \factor \diam{\values{r}{}}\;.$$
\end{lemma}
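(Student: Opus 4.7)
The plan is to reduce everything to \cref{lemma::closeVotes} applied to two votes accepted into $\values{r+1}{}$. Concretely, fix arbitrary $v_p \in \values{r+1}{i}$ and $v_q \in \values{r+1}{j}$ for nonfaulty nodes $i,j$. By the definition of \valueMessageReady{} and because $r+1 \geq 2$ we always fall into the ``$s>1$'' branch, so there exist multisets $rec\_vals_p, rec\_vals_q$, each of size at least $n-t$, with $rec\_vals_p \subseteq \values{r}{i}$ and $rec\_vals_q \subseteq \values{r}{j}$, such that $v_p = \vote{rec\_vals_p}$ and $v_q = \vote{rec\_vals_q}$. Since $\values{r}{i}, \values{r}{j} \subseteq \values{r}{}$, both $rec\_vals_p$ and $rec\_vals_q$ are contained in $\values{r}{}$.

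Next I would set $P := rec\_vals_p \cup rec\_vals_q \subseteq \values{r}{}$ and verify the hypotheses of \cref{lemma::closeVotes}. The intersection condition $|rec\_vals_p \cap rec\_vals_q| \geq n-t$ is exactly the content of \cref{lemma::valueIntersection}. For the bound $|P| \leq n$, I would invoke the fact that each entry in any \values{r}{k} is tagged by the sender and that the broadcast channel's \emph{uniqueness} guarantees at most one ``value'' message per sender per round; hence $P$ can be indexed by distinct senders, of which there are only $n$. This is the step that requires a little care to phrase correctly because the \values{}{} objects are formally multisets, but it is essentially a bookkeeping argument using broadcast uniqueness and the tag convention.

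With the hypotheses verified, \cref{lemma::closeVotes} gives
\[
\dist{v_p}{v_q} \;=\; \dist{\vote{rec\_vals_p}}{\vote{rec\_vals_q}} \;\leq\; \tfrac{1}{2}\diam{P} \;\leq\; \tfrac{1}{2}\diam{\values{r}{}},
\]
where the last inequality follows because $P \subseteq \values{r}{}$ and the diameter is monotone under subset inclusion. Since $v_p, v_q$ were arbitrary elements of $\values{r+1}{} = \bigcup_{k \in G} \values{r+1}{k}$, taking the supremum over all such pairs yields $\diam{\values{r+1}{}} \leq \tfrac{1}{2}\diam{\values{r}{}}$, which is the desired conclusion.

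The main obstacle I anticipate is the subtle point about verifying $|P| \leq n$: one must rule out that different nonfaulty nodes could accept distinct values from the same (possibly Byzantine) sender with the same round tag, which is precisely where the uniqueness property of the broadcast channel is used. Everything else is a clean application of the preparatory lemmas (\cref{lemma::valueIntersection} for the intersection, \cref{lemma::closeVotes} for the shrinkage), and no new geometric argument is needed beyond what has already been proved in \cref{sec::voteProperties}.
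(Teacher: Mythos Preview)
Your proposal is correct and follows essentially the same approach as the paper: both reduce to \cref{lemma::closeVotes} after extracting the two underlying $rec\_vals$ sets from the accepted ``value'' messages and invoking \cref{lemma::valueIntersection} for the intersection bound. The only cosmetic difference is that the paper applies \cref{lemma::closeVotes} with $P=\values{r}{}$ directly (leaving the $|P|\le n$ condition implicit via broadcast uniqueness), whereas you take $P=rec\_vals_p\cup rec\_vals_q$ and then bound $\diam{P}\le\diam{\values{r}{}}$; your explicit discussion of the $|P|\le n$ step is, if anything, more careful than the paper's.
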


\begin{proof}
    Observe two points $v_p,v_q\in \values{r+1}{}$.
    Assume that $v_p\in \values{r+1}{i}$ and $v_q\in \values{r+1}{j}$ for some $i,j\in G$.
    Node $i$ added $v_p$ to $\values{r+1}{i}$ after receiving the message $(``value",v_p,rec\_vals_p,rec\_reps_p,r+1)$ from $p$.
    Similarly, node $j$ added $v_q$ to $\values{r+1}{j}$ after receiving the message 
    $(``value",v_q,rec\_vals_q,rec\_reps_q,r+1)$ from $q$.
    Seeing as $i,j$ added the values $v_p,v_q$ to their respective \values{}{} sets, they found that $\left|rec\_vals_p\right|,\left|rec\_vals_q\right|\geq n-t$ and $v_p=\vote{rec\_vals_p}$,$v_q=\vote{rec\_vals_q}$.
    Furthermore, $rec\_vals_p\subseteq \values{r}{i}$, $rec\_vals_q\subseteq \values{r}{j}$ and thus $rec\_vals_p, rec\_vals_q\subseteq \values{r}{}$.
    From Lemma~\ref{lemma::valueIntersection}, we know that  $\left|rec\_vals_p\cap rec\_vals_q\right| \geq n-t$.
    Combining all of those observations we can see that the conditions of Lemma~\ref{lemma::closeVotes} hold. 
    Therefore:
    \begin{align}
        \dist{v_p}{v_q}&=
        \dist{\vote{rec\_vals_p}}{\vote{rec\_vals_q}}\\ 
        &\leq \factor \diam{\values{r}{}}\;.
    \end{align}
    Since we dealt with two arbitrary points $v_p,v_q\in \values{r+1}{}$, we conclude that for every $r\geq 1$:
    $$\diam{\values{r+1}{}}\leq \factor \diam{\values{r}{}}\;.$$
\end{proof}

\subsection{Initial Diameter Approximation Lemmas}
The lemmas in this section show that we can use the diameter of every nonfaulty node's \values{}{} set in the end of $init\_node$ to bound the distance between every two points accepted by any two nonfaulty nodes in round $1$.
This in turn means that any nonfaulty node's perceived diameter in round $0$ can be used to calculate the number of needed rounds.

\begin{lemma}[Initial Diameter Approximation]\label{lemma::initApproximation}
    Let $i,j,k$ be three nonfaulty nodes that started the first round of the protocol.
    Let $v_p\in \values{1}{i}$ and $v_q\in \values{1}{j}$.
    Then
    $\dist{v_p}{v_q}\leq 3\diam{\values{0}{k}}$ at the time $k$ completes $init\_node$.
\end{lemma}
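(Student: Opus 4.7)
The plan is to unwind the round-$1$ acceptance check to expose witness sets, apply Lemma~\ref{lemma::elimIntersection} to relate each witness set to $\values{0}{k}$, and close with a triangle inequality.

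First, since $v_p \in \values{1}{i}$, node $i$ must have accepted some value message $(``value", v_p, U_p, rec\_reps_p, 1)$ from a node $p$. The $s=1$ branch of $\valueMessageReady$ then guarantees $v_p \in \conv{\Elim{t}{U_p}}$, $|U_p|,|rec\_reps_p|\geq n-t$, every $R\in rec\_reps_p$ satisfies $R\subseteq U_p$, and $U_p\subseteq \values{0}{i}$. Let $U_q$ and $rec\_reps_q$ be the analogous objects for $v_q$. Write $W := \values{0}{k}$ at the moment $k$ completes $init\_node$; the loop-exit condition forces $|\reports{0}{k}|\geq n-t$ at that moment, and the check in $\reportMessageReady$ tells us every $R\in\reports{0}{k}$ has $R\subseteq W$ and $|R|\geq n-t$.

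The heart of the argument is to verify the hypotheses of Lemma~\ref{lemma::elimIntersection} for the pair $(W, U_p)$ (and symmetrically $(W, U_q)$). Every element of $W$ or $U_p$ was added only upon receipt of an $init\_value$ broadcast from some node, and by the uniqueness of broadcasts each node contributes at most one such value; hence $|W\cup U_p|\leq n$. To see that $|W\cap U_p|\geq n-t$, I count senders of reports: the $\geq n-t$ reports in $\reports{0}{k}$ come from $\geq n-t$ distinct senders, and likewise for $rec\_reps_p$, so since $2(n-t)>n$ there is a common sender $l$. By broadcast uniqueness $l$'s report is literally the same set $R_l$ in both $\reports{0}{k}$ and $rec\_reps_p$, and $R_l\subseteq W$, $R_l\subseteq U_p$, $|R_l|\geq n-t$, which gives the intersection bound.

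Lemma~\ref{lemma::elimIntersection} now yields a point $w_p\in W\cap \Elim{t}{U_p}$ together with $\diam{\Elim{t}{U_p}}\leq \diam{W}$; the symmetric application produces $w_q\in W\cap \Elim{t}{U_q}$ with $\diam{\Elim{t}{U_q}}\leq \diam{W}$. Because $v_p,w_p\in \conv{\Elim{t}{U_p}}$ and the diameter of a convex hull equals the diameter of its generators, $\dist{v_p}{w_p}\leq \diam{\Elim{t}{U_p}}\leq \diam{W}$; likewise $\dist{w_q}{v_q}\leq \diam{W}$; and $\dist{w_p}{w_q}\leq \diam{W}$ since both lie in $W$. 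The triangle inequality closes the proof:
\[
\dist{v_p}{v_q}\leq \dist{v_p}{w_p}+\dist{w_p}{w_q}+\dist{w_q}{v_q}\leq 3\diam{W}.
\]
The main obstacle is the double-unwinding required to locate the common report: the pigeonhole must be performed at the level of senders (not of report contents), and broadcast uniqueness is essential to conclude that the shared sender's report is the identical set $R_l$ on both sides. Once that is in hand, the rest is a routine application of Lemma~\ref{lemma::elimIntersection} and a triangle inequality.
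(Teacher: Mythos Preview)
Your proof is correct and follows essentially the same route as the paper's: unwind the round-$1$ acceptance check, use a pigeonhole on report senders (together with broadcast uniqueness) to exhibit a common report set of size $\geq n-t$ inside both $U_p$ and $\values{0}{k}$, invoke \cref{lemma::elimIntersection} to obtain both the diameter comparison and a shared point, and finish with the triangle inequality. The paper compresses your sender-level pigeonhole into a reference to the counting argument of \cref{lemma::valueIntersection}, but the substance is identical.
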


\begin{proof}
    Since $v_p\in \values{1}{i}$, node $i$ received a
    $(``value",v_p,rec\_vals_p,rec\_reps_p,1)$
    broadcast from node $p$ such that eventually the message satisfies the conditions of Algorithm~\ref{alg::valueCheck}, menaing that $rec\_reps_p\subseteq \reports{0}{i}$ and that $\left|rec\_reps_p\right|\geq n-t$.
    Observe the sets $vals_k^0,reports_k^0$ at the time $k$ completes $init\_node$.
    At that time we know that both sets are at least of size $n-t$.
    Applying a counting argument similar to the one in Lemma~\ref{lemma::valueIntersection} we find that $rec\_reps_p\cap\reports{0}{k}\neq\emptyset$ and thus $\left|rec\_vals_p\cap vals_k^0\right|\geq n-t$.
    
    Nodes only add values to $vals^0$ after receiving an $``init\_value"$ broadcast.
    Nodes can only send one such broadcast, so the union of all such sets contains $n$ values at most.
    Note that $rec\_vals_p\subseteq vals_i^0, rec\_vals_q\subseteq vals_j^0$.
    Finally, $i$ and $j$ check that $v_p\in \conv{\Elim{t}{rec\_vals_p}}$ and $v_q\in \conv{\Elim{t}{rec\_vals_q}}$.
    Overall, $\left|vals_k^0\cap rec\_vals_p\right|\geq n-t$ and $\left|vals_k^0\cup rec\_vals_p\right|\geq n$. 
    From \cref{lemma::elimIntersection}, $\diam{vals_k^0}\geq \diam{\Elim{t}{rec\_vals_p}}$ and that $vals_k^0\cap \Elim{t}{rec\_vals_p}\neq \emptyset$.
    Similarly, $\diam{vals_k^0}\geq \diam{\Elim{t}{rec\_vals_q}}$ and $vals_k^0\cap \Elim{t}{rec\_vals_q}\neq \emptyset$.
    Set two points $x\in vals_k^0\cap \Elim{t}{rec\_vals_p}$ and $y\in vals_k^0\cap \Elim{t}{rec\_vals_q}$
    From the convexity of the distance function, $\dist{x}{v_p}\leq \diam{\Elim{t}{rec\_vals_p}}$ and $\dist{x}{v_q}\leq \diam{\Elim{t}{rec\_vals_q}}$.
    In addition, $x,y\in vals_k^0$, so $\dist{x}{y}\leq \diam{vals_k^0}$
    Combining these observations and using the triangle inequality, we get:
    
    \begin{align*}
        \dist{v_p}{v_q}&\leq \dist{v_p}{x} +\dist{x}{y} + \dist{y}{v_q}\\
        &\leq \diam{\Elim{t}{rec\_vals_p}} + \diam{vals_k^0} + \diam{\Elim{t}{rec\_vals_q}}\\
        &\leq \diam{vals_k^0} + \diam{vals_k^0} + \diam{vals_k^0}=3 \diam{vals_k^0}
    \end{align*}
\end{proof}

\subsection{Main Theorems}
\mainTheorem*
\begin{proof}
We will prove each of the properties holds.

\textbf{Termination.}
Assume all nonfaulty nodes participate in \protocolName and have valid inputs.
From Lemma~\ref{lemma::initTermination} we know that all of them complete the $init\_node$ call.
First we will show that at least one nonfaulty node terminates throughout the protocol.
Assume by way of contradiction that none of them do.
Using Lemma~\ref{lemma::roundProgress} and simple induction we can show that for every $r\in\mathbb{N}$, all nonfaulty nodes eventually reach round $r$.
Note that at the time $i$ completes $init\_node$, $\left|termination\_times\right|\geq n-t$ and thus $halt_i$ is well defined.
Since the $t+1$'th smallest element in $termination\_times$ is monotonically decreasing (adding elements cannot increase the $t+1$'th smallest values),
as $i$ continues participating in the protocol $halt_i$ can only become smaller.
That means that eventually, every nonfaulty node $i$ will see that $r\geq halt_i$ and terminate.

Now assume some nonfaulty node terminates.
Let $s$ be the smallest round number such that some nonfaulty node terminates in round $s$, and let $j$ be such a nonfaulty node.
Now assume by way of contradiction that some nonfaulty node $i$ never terminates.
Since no nonfaulty node terminates before round $s$, from Lemma~\ref{lemma::roundProgress} we know that all nonfaulty nodes eventually reach it.
Node $j$ terminated in round $s$, which means that at the time it terminated, it must have found that $s\geq halt_j$.
Eventually, $i$ will receive all of the $\left(``enough",e\right)$ messages that $j$ received, and therefore at that time $termination\_times_j\subseteq termination\_times_i$.
Since the $t+1$'th smallest element is monotonically decreasing, at that time $s\geq halt_j\geq halt_i$ and $i$ will terminate.

\textbf{Validity.}
Let $C=\conv{\left\{v\in\RR^m|\exVal{v}=true\right\}}$.
We will prove by induction that for every nonfaulty $i$ and $r\geq 0$, $\values{r}{i}\subseteq C$ and $v_i^{r+1}\in C$ .
For $r=0$, let $u\in \values{0}{i}$.
Node $i$ would have only added $u$ to $\values{0}{i}$ if $\exVal{u}=true$ and thus $u\in C$.
After completing $init\_node$, $v^1_i=\vote{\Elim{t}{\values{0}{i}}}$.
Note that $\Elim{t}{\values{0}{i}}\subseteq \values{0}{i}$.
From \cref{lemma::convexhull}, $\vote{\Elim{t}{\values{0}{i}}}\in\conv{\values{0}{i}}$.

Next, we will prove the case for $r=1$.
Observe any $v_j\in \values{1}{i}$.
Node $i$ would have only added the value to $\values{1}{i}$ if it received  $\left(``value",v_j,rec\_vals_j,rec\_reps_j,r\right)$ from $j$, and eventually found that $v_j\in\conv{\Elim{t}{rec\_vals_j}}$ and that $rec\_vals_j\subseteq \values{0}{i}$.
As shown above, $\values{0}{i}\subseteq C$, and since $\Elim{t}{}$ only removes points, $\Elim{t}{rec\_vals_j}\subseteq rec\_vals_j$. Therefore, $v_j\in\conv{\Elim{t}{rec\_vals_j}}\subseteq\conv{\values{0}{i}}\subseteq C$.
In addition, $i$ computes $v_i^{r+1}=\vote{\values{r}{i}}$.
From \cref{lemma::convexhull}, $v_i^{r+1}\in\conv{\values{r}{i}}\subseteq C$.

Now observe some $r>1$, and assume the claim holds for $r-1$.
Observe any $v_j\in \values{r}{i}$.
Since $i$ added $v_j$ to $\values{r}{i}$, it received  $\left(``value",v_j,rec\_vals_j,rec\_reps_j,r\right)$ from $j$, and eventually found that $rec\_vals_j\subseteq \values{r-1}{i}$ and $v_j=\vote{rec\_vals_j}$.
From \cref{lemma::convexhull}, $v_j\in\conv{rec\_vals_j}\subseteq\conv{\values{r-1}{i}}\subseteq C$.
Finally, $i$ computes $v_i^{r+1}=\vote{\values{r}{i}}$.
Again, following the same logic, we find that $v_i^{r+1}\in C$ as required.

If a nonfaulty node $i$ terminated, it must have first completed $init\_node$, computed $v^1_i$ and set $r=1$.
Throughout the protocol, $i$ only increments $r$, so when it completes the protocol, it outputs $v^r_i$ for $r\geq 1$.
As shown above, for every $r\geq 1$ $v^r_i\in C$, as required.

\textbf{Correctness.}
For every nonfaulty $i$ and round $r\geq 1$ we will observe the set $\values{r}{i}$ at a point in time where $i$ doesn't add any more values to it (i.e. at a time where it is maximal in size).
In addition, for every nonfaulty $i$ we will define $\values{init}{i}$ to be the set $\values{0}{i}$ at the time $i$ computes $enough_i$ in line~\ref{line::enoughComputation} of $init\_node$.
Define $G$ to be the set of nonfaulty nodes and for every $r\geq 1$ define $\values{r}{}=\bigcup_{i\in G} \values{r}{i}$.
First we would like to show that for every nonfaulty $i$, $3\diam{\values{init}{i}}\geq \diam{\values{1}{}}$.
Observe the values $v_p,v_q\in \values{1}{}$ such that $\dist{v_p}{v_q}=\diam{\values{1}{}}$.
Assume that $v_p\in \values{1}{j}$ and $v_q\in \values{1}{k}$ for some $j,k\in G$.
By Lemma~\ref{lemma::initApproximation} we know that $\diam{\values{1}{}}=\dist{v_p}{v_q}\leq 3\diam{\values{init}{i}}$.

We now turn to observe the sets $\values{r}{}$ for every $r\geq 1$.
Using Lemma~\ref{lemma::shrinkingDiameter}, we know that for every $r\geq 1$:
$$\diam{\values{r+1}{}}\leq\factor \diam{\values{r}{}}\;.$$
Combining the two previous observations, we now know that for every nonfaulty $i$ and round $r\geq 1$:

$$\factor^{\left(r-1\right)}\cdot 3\diam{\values{init}{i}}\geq
\factor^{\left(r-1\right)} \diam{\values{1}{}}\geq 
\diam{\values{r}{}}$$
Setting $r\geq enough_i$, we find that:
\begin{align*}
    \diam{\values{r}{}}&\leq
    3\diam{\values{init}{i}}\cdot \factor^{\left(r-1\right)}\\
    & \leq 3\diam{\values{init}{i}}\cdot \factor^{\left(\left\lceil \log_{\logbase}\left(\nicefrac{3\diam{\values{init}{i}}}{\epsilon}\right)\right\rceil+1-1\right)}\\
    & \leq 3\diam{\values{init}{i}}\cdot \factor^{\left( \log_{\logbase}\left(\nicefrac{3\diam{\values{init}{i}}}{\epsilon}\right)\right)} \\
    & = 3\diam{\values{init}{i}} \cdot \frac{\epsilon}{3\diam{\values{init}{i}}}=\epsilon\;.
\end{align*}

Finally, observe the smallest round $s$ in which some nonfaulty node terminates, and let $i$ be such a node.
First, clearly if all nonfaulty nodes would have sent their values in round $s$, they would all eventually accept those values and add them to their respective \values{s}{} sets.
In that case all of the arguments above would also apply to those values.
This means that we can consider all of the values $v^s_i$ output by nonfaulty nodes to be part of \values{s}{}.
Observe $halt_i$ at the time $i$ terminates.
As shown in the proof of the Validity property, we know that for some nonfaulty node $j$, $s\geq halt_i\geq enough_j$.
From our previous observation, we know that $\diam{\values{s}{}}\leq \epsilon$.
Furthermore, we've also shown in the proof of the Validity property that for every nonfaulty $i$ and round $r\geq 1$, $\values{r}{i}\subseteq \conv{\values{r-1}{i}}$, which means that also $\values{r}{}\subseteq \conv{\values{r-1}{}}$.
In other words all nonfaulty nodes output values in $conv\left(
\values{s}{}\right)$, and thus all of their outputs are no more than $\epsilon$ apart.

\end{proof}

\begin{theorem}
Let $V=\left\{v\in \RR^m|\exVal{v}\right\}$ be the set of all valid inputs.
Every nonfaulty node runs for $O\left(\log\left(\frac{ \diam{V}}{\epsilon}\right)\right)$ rounds.
\end{theorem}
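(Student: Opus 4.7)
The plan is to show that every nonfaulty node $i$ has $halt_i = O(\log(\diam{V}/\epsilon))$ by the time it terminates, and then combine this with the fact that each round eventually completes (Lemma~\ref{lemma::roundProgress}) to get the desired bound on round complexity.

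First I would bound the initial $enough$ value of every nonfaulty node. The key observation is that nonfaulty nodes only add $v$ to $\values{0}{}$ when $\exVal{v}=true$, so $\values{0}{i}\subseteq V$ for every nonfaulty $i$. Hence $\diam{\values{0}{i}}\leq \diam{V}$, and plugging into the expression computed in line~\ref{line::enoughComputation} of $init\_node$ gives $enough_i\leq \lceil \log_2(3\diam{V}/\epsilon)\rceil+1 = O(\log(\diam{V}/\epsilon))$. Let $M$ denote the maximum of $enough_i$ over nonfaulty $i$; then $M = O(\log(\diam{V}/\epsilon))$.

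Next I would argue that $halt_i \le M$ eventually holds for every nonfaulty $i$. By Lemma~\ref{lemma::initTermination} and the Liveness property of broadcast channels, $i$ eventually receives $(``enough",enough_j)$ from every nonfaulty $j$, so $termination\_times_i$ eventually contains at least $n-t$ nonfaulty values, each $\le M$. The quantity $halt_i$ is the $(t+1)$-th smallest element of $termination\_times_i$. Since at most $t$ of its entries are faulty, at least $n-2t\geq 1$ of the values that are $\geq halt_i$ are nonfaulty, and each such nonfaulty value is $\leq M$; hence $halt_i\leq M$. Moreover, adding a new entry to $termination\_times_i$ can only leave the $(t+1)$-th smallest unchanged or decrease it, so once $halt_i\le M$, it remains so.

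Finally, I would combine monotonicity with the liveness lemmas to conclude. Once $halt_i\leq M$, as soon as $r_i$ exceeds $halt_i$ the main loop's guard fails and $i$ terminates; Lemma~\ref{lemma::roundProgress} guarantees that $i$ actually reaches such an $r_i$ in finite time through at most $M+1$ completed rounds. Therefore $i$ runs for $O(M)=O(\log(\diam{V}/\epsilon))$ rounds. The main obstacle is the subtle fact that $halt_i$ is chosen as the $(t+1)$-th smallest value rather than a simple maximum or minimum: one must simultaneously argue that it is large enough to avoid premature termination (handled in the Termination/Correctness arguments of Theorem~\ref*{mainTheorem}) and small enough to guarantee the logarithmic bound, and both sides rely on the counting argument that among the values at or above the $(t+1)$-th smallest there must exist a nonfaulty entry.
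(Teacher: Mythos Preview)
Your approach is essentially the paper's: bound every nonfaulty $enough_j$ by $M=\lceil\log_2(3\diam{V}/\epsilon)\rceil+1$ via $\values{0}{j}\subseteq V$, then use a pigeonhole argument on the $(t+1)$-th order statistic to conclude $halt_i\leq M$. There is, however, a timing gap in your second paragraph. You establish $halt_i\leq M$ only \emph{eventually}, after $i$ has received all $n-t$ nonfaulty ``enough'' messages. But the round bound needs $halt_i\leq M$ from the moment the main loop begins: in an asynchronous system the adversary may delay the remaining nonfaulty ``enough'' broadcasts while letting the value and report messages of the main loop flow freely, so $r_i$ could in principle grow past $M$ before $halt_i$ ever drops below $M$. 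Your phrase ``through at most $M+1$ completed rounds'' therefore does not follow from what you have argued.

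The paper closes this gap by applying the same counting argument at the instant $init\_node$ completes. At that moment $|termination\_times_i|\geq n-t$ already, so at least $(n-t)-t\geq t+1$ of its entries lie at or above $halt_i$; since at most $t$ entries in total are faulty, one of those is some nonfaulty $enough_j\leq M$, giving $halt_i\leq M$ before the main loop even starts. Combined with the monotonicity you already observed, this bounds the total number of iterations by $M$ outright. Your detour through ``eventually receives every nonfaulty message'' is unnecessary and is precisely what introduces the gap; simply apply your pigeonhole argument to the $\geq n-t$ entries already present when $init\_node$ finishes.
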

\begin{proof}
    Observe some nonfaulty node $i$.
    At the time $i$ completes $init$, $\left|termination\_times_i\right|\geq 2t+1$.
    At that time $halt_i$ is defined to be the $t+1$'th smallest value in $termination\_times_i$.
    Since there are $n-t\geq 2t+1$ values in $termination\_times_i$, there are at least $t+1$ values $e\in termination\_times_i$ such that $halt_i\leq e$.
    There are only $t$ faulty nodes, so $halt_i\leq e$ for a value $e$ that some nonfaulty $j$ sent.
    Node $j$ sends the value $enough_j=\left\lceil \log_{\logbase}\left(\nicefrac{3\diam{\values{0}{j}}}{\epsilon}\right)\right\rceil+1$.
    As shown in the proofs of the Correctness and Validity properties of \protocolName, $halt_i$ can only become smaller throughout the algorithm, and $\values{0}{j}\subseteq V$.
    This leads us to conclude that $i$ terminates after seeing that $r\geq \left\lceil \log_{\logbase}\left(\nicefrac{3\diam{\values{0}{j}}}{\epsilon}\right)\right\rceil+1$, and thus won't run for more than $ \left\lceil \log_{\logbase}\left(\nicefrac{3\diam{V}}{\epsilon}\right)\right\rceil+1$ rounds.
\end{proof}

\section{Conclusions and Future Work}\label{sec::conclusions}
In this work we formalized the task of Validated Byzantine Asynchronous $\epsilon$-Agreement over $\RR^m$, provided an efficient protocol that solves it, and proved it is optimally resilient.
In future works we would like to further explore the task in the traditional setting (i.e. in which nodes output values in the convex hull of the nonfaulty inputs) and check if our method can help solve the problem efficiently.
We believe that our extended witness technique is also relevant to the traditional variant of the problem.
Using this technique, hybrid versions of the Mendes-Herlihy and Vaidya-Garg algorithms \cite{mendes2015combined} and our method can be used to derive more computationally efficient solutions by running one initialization round which requires exponential computation.
In this round, nodes will compute the Safe Area as described in \cite{mendes2015combined}.
Each output of that computation is in the convex hull of the nonfaulty inputs.
After the initialization round, nodes can check inter-round consistency using our protocol, which will prevent faulty nodes from reporting values outside of the convex hull of nonfaulty inputs.
By switching over to our validated solution, the protocol still requires a logarithmic number of rounds, with only the first one requiring exponential computation. 
In fact there may be many ways of running an initialization round, and we think this is an interesting future direction.

Another avenue of future research could involve also relaxing the validity property of the protocol.
In addition to only requiring $\epsilon$-Correctness, we could conceivably require only $\delta$-Validity. 
By this we mean that nodes aren't required to output values in the convex hull of nonfaulty inputs, but to output values that are close to nonfaulty values. 
One way to formalize this is by defining $V$ to be the set of all nonfaulty inputs, and allowing parties to output values in a ball of diameter $\delta\cdot \diam{V}$ which contains all points in $V$.
Note that this condition still rules out trivial solutions by requiring the ball to contain the values in $V$.
The $\protocolName$ protocol also achieves $3$-Validity in the traditional setting, i.e. when ignoring the external validity function (or setting it to always return $true$). 
This can be shown by proving the following informally stated lemma:
\begin{restatable}{lemma}{deltaValidity}
Let $V$ be the set of all nonfaulty inputs.
When executing $\protocolName$ with $\exVal{}$ always returning $true$, if a nonfaulty node accepts a value $v$ in round $1$, then there exists some nonfaulty input $x\in V$ such that $\dist{v}{x}\leq\diam{V}$.
\end{restatable}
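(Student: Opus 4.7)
The plan is to unpack what acceptance of $v$ in round $1$ forces on the accompanying witness set, and then run a case analysis on the elimination procedure. A nonfaulty node only accepts a round-$1$ value $v$ after Algorithm~\ref{alg::valueCheck} verifies that $v\in\conv{\Elim{t}{rec\_vals}}$ for some set $rec\_vals$ with $|rec\_vals|\geq n-t$, whose entries were originally broadcast as ``init\_value'' messages. Since every node broadcasts at most one such message, $rec\_vals$ carries values from at least $n-t$ distinct nodes; hence it contains at most $t$ faulty entries and at least $n-2t\geq t+1$ points from $V$.

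First I would set $S=\Elim{t}{rec\_vals}$ and classify the $t$ eliminated pairs by endpoint type: let $a$ pairs be entirely nonfaulty, $b$ pairs mixed, and $c$ pairs entirely faulty. A short counting argument using $a+b+c=t$ together with the bound $b+2c\leq t$ on the total number of faulty deletions gives the dichotomy: either $a=0$, which forces $c=0$ and $b=t$ so that $S$ contains no faulty point, or $a\geq 1$ so that at least one eliminated pair is entirely nonfaulty.

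Next I would close each case by exhibiting a nonfaulty $x\in V$ within distance $\diam{V}$ of $v$. In the first case $S\subseteq V$, whence $\diam{S}\leq\diam{V}$, and any $x\in S$ suffices via the convexity bound $\|v-x\|\leq\max_{y\in S}\|y-x\|$. In the second case, choose any iteration $i$ whose eliminated pair $(p,q)$ is entirely nonfaulty; the current set $A_i$ at that iteration then satisfies $\diam{A_i}=\dist{p}{q}\leq\diam{V}$, and since later iterations only delete elements we have $S\subseteq A_i$. Thus $\dist{y}{p}\leq\diam{V}$ for every $y\in S$, and the same convexity bound yields $\dist{v}{p}\leq\diam{V}$ with witness $x=p\in V$.

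The main obstacle is that $\Elim{t}{}$ can in principle strip every nonfaulty value out of $S$, so one cannot simply point at a nonfaulty survivor. The dichotomy above is what sidesteps this: whenever faulty values \emph{do} survive in $S$, the deletion history must have spent at least one iteration on a nonfaulty-nonfaulty pair, and that pair's controlled diameter caps the distance from every survivor to one of its nonfaulty endpoints. The auxiliary facts needed---diameter monotonicity $\diam{A_{i+1}}\leq\diam{A_i}$ and the nesting $S\subseteq A_i$---are immediate from the definition of $\Elim{t}{}$, so the argument reduces to the combinatorial case split once it is set up.
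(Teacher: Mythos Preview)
Your argument is correct and follows essentially the same route as the paper's proof. Both establish that $v\in\conv{\Elim{t}{rec\_vals}}$ with $rec\_vals$ containing at most $t$ non-$V$ entries, reduce via convexity to bounding $\dist{y}{x}$ for all $y\in S=\Elim{t}{rec\_vals}$ and a single $x\in V$, and split into the same two cases: either $S\subseteq V$, or some eliminated pair lies entirely in $V$ and its distance caps $\diam{A_i}\leq\diam{V}$. The only cosmetic difference is that you reach the dichotomy by the explicit $a,b,c$ count with $a+b+c=t$ and $b+2c\leq t$, whereas the paper argues by contradiction (if no eliminated pair is entirely in $V$, each iteration removes a non-$V$ point, exhausting them); these are the same observation in different dress.
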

The lemma is formally stated and proved in the appendix.
Now we know that in the first round of the protocol, all values accepted by all nonfaulty nodes are of distance $\diam{V}$ or less from the convex hull of nonfaulty inputs, which means that they are within a ball of diameter $3\diam{V}$ which contains $V$.
Following the proof in the rest of the paper, we can now conclude that every nonfaulty node's output will be within that ball\footnote{As currently stated, the adversary can choose very distant points in the initialization round, inflating the number of rounds required. In order to mitigate this, two initialization rounds can be used instead: the first resulting in the distance between accepted points being no more than $3\diam{V}$, and the second used to compute the number of required rounds by approximating the diameter of the resulting ball.}.
As far as we know, this is the first formulation of such a property and future research might lead to a much smaller $\delta$ by using different techniques in the initial rounds.


\bibliographystyle{plainurl}
\bibliography{bib}

\newpage
\section{Appendix}
\appendix

\section{Lower Bound Sketch }\label{sec::lowerBound}
It is natural to ask whether the presented protocol is optimally resilient, i.e. whether there could exist a Validated Byzantine Asynchronous Multidimensional $\epsilon$-Agreement protocol resilient to $t$ Byzantine nodes such that $n\leq 3t$.
It is important to state that \cite{mendes2015combined} proves a lower bound showing that there does not exist an approximate agreement protocol over $\RR^m$ resilient to any $t$ such that $n\leq \left(m+2\right)t$.
The protocol presented in this paper manages to circumvent this lower bound by assuming an external validity function and only requiring valid outputs to be within the convex hull of the valid inputs.
We can show that no Validated Byzantine Asynchronous Multidimensional $\epsilon$-Agreement protocol can be resilient to $t\geq\frac{n}{3}$ faulty nodes using ideas from \cite{DLS1988}.

Set some $t,n$ such that $n\leq 3t$. In addition, set some $\epsilon$ and $m$.
Assume by way of contradiction that there exists some Validated Byzantine Asynchronous $\epsilon$-Agreement protocol over $\RR^m$ resilient to $t$ faults.
Uniformly and independently sample two values $v_1\in\left(0,1\right)^m$ and $v_2\in\left(\epsilon+1,\epsilon+2\right)^m$ and define $\exVal$ as follows:
$$\exVal{x}=\begin{cases}
true\ &x\in\left\{v_1,v_2\right\} \\
false\ &else
\end{cases}\;.
$$
Let $B\subseteq \left[n\right]$ be some set of nodes of size $t$ that the adversary will control.
If $n-t=1$, the adversary will control only $t-1$ nodes instead.
Divide the remaining nodes into two sets $G_1$ and $G_2$ of sizes $\left\lceil\frac{n-t}{2}\right\rceil$ and $\left\lfloor\frac{n-t}{2}\right\rfloor$ respectively.
All nodes in $G_1$ will receive $v_1$ as input and all nodes in $G_2$ will receive $v_2$ as input and run the protocol.
All nodes in $B$ will communicate with nodes in $G_1$ as nonfaulty nodes would with input $v_1$.
Similarly all nodes in $B$ will communicate with nodes in $G_2$ as nonfaulty nodes with input $v_2$.
All messages between nodes in $B$ and $G_1$ and between nodes in $B$ and $G_2$ are delivered instantly.
On the other hand, all communication between nodes in $G_1$ and $G_2$ is delayed until all of those nodes complete the protocol.
Technically, this scheduling might not be valid if those nodes never complete the protocol, but we will show that this is not the case.
From the point of view of all nodes in $G_1$, any execution of the protocol in this setting indistinguishable from a setting where the adversary controls all nodes in $G_2$ and instructs them to be silent.
If the adversary controls the nodes in $G_2$ and the rest of the nodes communicate freely, the nodes in $G_1$ must complete the protocol.
Since those two settings are indistinguishable to nodes in $G_1$, they must do complete protocol if the adversary controls the nodes in $B$ as well.
The same argument can be made for nodes in $G_2$.
Now, from the point of view of nodes in $G_1$, this setting is indistinguishable from a setting in which the only valid input is $v_1$.
This is because the only other valid value was uniformly and independently sampled from $\left(\epsilon+1,\epsilon+2\right)^m$.
The probability that any given value is chosen is $0$, so there is no way for any node in $G_1$ to find out which other value was sampled (or even whether there exists any other value $v\in\RR^m$ such that $\exVal{v}=true$).
From the Validity property, every node in $G_1$ must then output the value $v_1$.
Using the exact same arguments, every node in $G_2$ must output the value $v_2$.
Note that from the way $v_1$ and $v_2$ were sampled we know that $\dist{v_1}{v_2}> \epsilon$, reaching a contradiction to the Correctness property of the protocol.

\section{Proofs for Technical Claims and Lemmas }\label{app::voteProofs}

\addingVote*
\begin{proof}
    Let $\left|A\right|=k$.
    The set $B$ is the multiset $A$ with $v$ added $l$ times, so $\left|B\right|=k+l$.
    Then:
    \begin{align*}
        \vote{B}&= \frac{1}{k+l}\sum_{u\in B} u\\
        &= \frac{1}{k+l} (l\cdot v+\sum_{u\in A}u)\\
        &= \frac{1}{k+l} (l(\frac{1}{k}\sum_{u\in A} u)+\frac{k}{k}(\sum_{u\in A}u))\\
        & = \frac{1}{l+k} (l+k)(\frac{1}{k}\sum_{u\in A} u)=\frac{1}{k}\sum_{u\in A} u=\vote{A}
    \end{align*}
\end{proof}

\closeVotes*
\begin{proof}
    Let $k=\left|U\setminus V\right|$ and $l=\left|V\setminus U\right|$.
    Note that:
    \begin{align*}
        n&\geq \left|U\cup V\right|=\left|U\cap V\right|+\left|U\setminus V\right|+\left|V\setminus U\right| \geq n-t + k +l
    \end{align*}
    and thus $t\geq k+l$.
    Define $U'$ to be the multiset $U$ with the value $\vote{U}$ added $l$ times and similarly define $V'$ to be the multiset $V$ with the value $\vote{V}$ added $k$ times.
    Note that now:
    \begin{align*}
        \left|U'\right|= \left|U\right|+l=\left|U\cap V\right|+\left|U\setminus V\right| +l= \left|U\cap V\right| + k + l
    \end{align*}
    and similarly, $\left|V'\right|=\left|U\cap V\right|+k+l$, and thus $\left|V'\right|=\left|U'\right|$.
    Also, $U\cap V\subseteq U\subseteq U'$, and similarly $U\cap V\subseteq V'$.
    This means that both $U'$ and $V'$ each consist of all elements in $U\cap V$, and each of the sets includes $k+l$ additional elements (some of which might be shared).
    Let $U'\setminus (U\cap V)=\{u_1,\ldots,u_{k+l}\}$ and $V'\setminus (U\cap V)=\{v_1,\ldots,v_{k+l}\}$.
    Note that for every set $A$, $\vote{A}\in \conv{A}$.
    By definition, for every $u\in U,v\in V$, $\dist{u}{v}\leq \diam{P}$.
    Distance is a convex function in each of its arguments, so for every $u\in U',v\in V'$ it is also true that $\dist{u}{v}\leq \diam{P}$.
    
    From \cref{lemma::addingVote}, $\vote{U}=\vote{U'}$ and $\vote{V}=\vote{V'}$, so $\dist{\vote{U}}{\vote{V}}=\dist{\vote{U'}}{\vote{V'}}$.
    Combining the previous observations:
    \begin{align*}
        \dist{\vote{U'}}{\vote{V'}} &= \left\lVert \vote{U'}-\vote{V'}\right\rVert\\
        &= \left\lVert \frac{1}{\left|U'\right|}\sum_{u\in U'} u-\frac{1}{\left|V'\right|}\sum_{v\in V'} v\right\rVert\\
        &=\frac{1}{\left|U'\right|}\left\lVert\sum_{u\in U'\setminus (U\cap V)} u-\sum_{v\in V'\setminus (U\cap V)} v\right\rVert\\
        & \leq \frac{1}{n-t}\left\lVert\sum_{u\in U'\setminus (U\cap V)} u-\sum_{v\in V'\setminus (U\cap V)} v\right\rVert\\
        &= \frac{1}{n-t}\left\lVert\sum_{i=1}^{k+l} u_i-\sum_{i=1}^{k+l} v_i\right\rVert\\
        &\leq \frac{1}{n-t}\sum_{i=1}^{k+l}\left\lVert u_i- v_i\right\rVert\\
        &\leq \frac{1}{n-t}(k+l)diam (P)\\
        &< \frac{1}{2t}\cdot t\cdot \diam{P}= \frac{1}{2}\diam{P}
    \end{align*}
\end{proof}

\setProgress*
\begin{proof}
    If $i$ completes the protocol, we are done. If that is not the case, we will prove the claim holds by induction on $s$.
    For $s=0$, if $v_k\in \values{0}{j}$, then $j$ received an $\left(``init\_value",v_k\right)$ broadcast from some node $k$ s.t. $\exVal{v_k}=true$.
    From the Liveness and Uniqueness properties of the broadcast channel, $i$ will also receive that message from node $k$, find that $\exVal{v_k}=true$ and add $v_k$ to $\values{0}{i}$.
    From the Uniqueness property of the broadcast channel there can only be $n$ such broadcasts, so the number of elements in $value^0_{j}$ is bounded, and we can make this argument for very such element.
    In addition, if $rec\_vals_k \in \reports{0}{j}$, then $j$ must have received a $\left(``report",rec\_vals_k,0\right)$ broadcast from $k$ and at some point in time $rec\_vals_k\subseteq \values{0}{j}$, $\left|rec\_vals_k\right|\geq n-t$.
    From the Liveness and Uniqueness properties of the broadcast channel, $i$ will also receive the $\left(``report",rec\_vals_k,0\right)$ message from $k$.
    Seeing as the sets only grow, following the same logic as above eventually $rec\_vals_k\subseteq \values{0}{j} \subseteq \values{0}{i}$ and $\left|rec\_vals_k\right|\geq n-t$, at which point $i$ will add $rec\_vals_k$ to $\values{0}{i}$.
    
    Now assume the claim holds for some $s-1$.
    Observe some $v_k\in \values{s}{j}$. Since $j$ is nonfaulty, it must have received a $\left(``value",v^s_k,rec\_vals^{s-1}_{k},rec\_reps^{s-1}_{k},s\right)$ broadcast and added it to $waiting\_reports_j$.
    Then, at some point in time, $j$ found in line~\ref{line::valueCheck} that the message satisfies the conditions of Algorithm~\ref{alg::valueCheck}.
    From the Liveness and Uniqueness properties of the broadcast protocol, $i$ will receive that message as well and add it to $waiting\_values_i$. 
    By assumption $i$ reaches round $r$, and thus also reaches round $s\leq r$. 
    By the induction hypothesis $\values{s-1}{j}\subseteq \values{s-1}{i}$ as well as $\reports{s-1}{j}\subseteq \reports{s-1}{i}$.
    In other words, eventually $i$ will see that the conditions of Algorithm~\ref{alg::valueCheck} hold in line~\ref{line::valueCheck} and add $v^s_k$ to $\values{s}{i}$.
    Similarly, if $rec\_vals_k\in reports^s_{j}$, then $j$ received a message $\left(``report",rec\_vals_k,s\right)$ and added it to $waiting\_reports_j$.
    Then, $j$ found in line~\ref{line::reportCheck} that $rec\_vals_k\subseteq \values{s}{j}$, $\left|rec\_vals_k\right| \geq n-t$.
    From the Liveness and Uniqueness properties of the broadcast channel, eventually node $i$ will receive that broadcast from $k$ as well and add it to $waiting\_reports_i$
    We've already shown that eventually $i$ will reach round $s$ and $rec\_vals_k\subseteq \values{s}{j}\subseteq \values{s}{i}$, $\left|rec\_vals_k\right|\geq n-t$.
    At that point $i$ will add $rec\_vals_k$ to $\reports{s}{i}$, completing our proof.
\end{proof}

\initTermination*
\begin{proof}
    Assume all nonfaulty nodes participate in the protocol and have valid inputs.
    In the beginning of the protocol, they call $init\_node$ and broadcast an $\left(``init\_value",v^0_i\right)$ message.
    From the Validity property of broadcast channels, every nonfaulty node $i$ will receive those messages.
    Since all of them have valid inputs, $i$ will find that $\exVal{v^0_j}=true$ for every nonfaulty $j$, and add $v^0_j$ to $\values{0}{i}$.
    After that, every nonfaulty $i$ will find that $\left|\values{0}{i}\right|\geq n-t$ and send a $\left(``report",\values{0}{i},0\right)$ message.
    Note that every nonfaulty node $j$ that completes $init\_node$ has at least one value in its $reports^0_j$ set.
    A nonfaulty $j$ will only add the set $rec\_vals$ to $reports^0_j$ if $\left|rec\_vals\right|\geq n-t$ and $rec\_vals\subseteq \values{0}{j}$.
    This implies that $\left|\values{0}{j}\right|\geq n-t$, which means that $j$ must have sent some ``report'' message for round 0.
    In other words, every nonfaulty $j$ that completes $init\_node$ also broadcast some ``report'' message for round 0.
    
    Now observe some nonfaulty $i$.
    If $i$ completes the protocol, then it must have clearly completed $init\_node$ as well.
    Otherwise, it received a $\left(``report",rec\_vals^0_j,0\right)$ message from every nonfaulty $j$ such that $rec\_vals^0_j\subseteq \values{0}{j}$ and $\left|rec\_vals^0_j\right|\geq n-t$.
    We know from Lemma~\ref{lemma::setProgress} that eventually for every nonfaulty $j$, $\values{0}{j} \subseteq \values{0}{i}$, at which point $i$ will find that the conditions in line~\ref{line::initReportCheck} of $init\_node$ hold and add $rec\_vals^0_j$ to $reports^0_{i}$.
    After adding such a tuple for every nonfaulty node, $i$ will find that $\left|reports^0_{i}\right|= n-t$ and broadcast an ``enough'' message.
    From the Validity property of broadcast channels, every nonfaulty node $i$ will receive the ``enough'' message sent by all nonfaulty nodes.
    At that point, both $\left|termination\_times\right|\geq n-t$ and $\left|reports^0_i\right|\geq n-t$, and $i$ will complete $init\_node$.
\end{proof}

\roundProgress*
\begin{proof}
    In the beginning of the round, every nonfaulty node broadcasts the message $(``value",\\ v^r_i, \values{r-1}{i}, \reports{r-1}{i},r)$.
    From the Validity property of broadcast channels, every nonfaulty node will receive the ``value'' message sent by every nonfaulty node in round $r$.
    Observe some nonfaulty node $i$.
    If node $i$ never completes round $r$, it must never terminate because $i$ doesn't terminate during the round.
    In that case, we know from Lemma~\ref{lemma::setProgress} that from some point on $\values{r-1}{j}\subseteq \values{r-1}{i}$ and $reports^{r-1}_{j}\subseteq reports^{r-1}_{i}$ for every nonfaulty node $j$.
    Therefore, if $j$ broadcasts some message $\left(``value", v^r_j, rec\_vals^{r-1}_{j}, rec\_reps^{r-1}_{j},r\right)$, then we know that eventually $rec\_vals^{r-1}_{j} \subseteq \values{r-1}{j}\subseteq \values{r-1}{i}$ and $rec\_reps^{r-1}_{j} \subseteq \reports{r-1}{j}\subseteq \reports{r-1}{i}$.
    Note that $j$ proceeds to round $r$ only after $\left|\reports{r-1}{j}\right|\geq n-t$ and at that point it computes $v^r_j=\vote{ \values{r-1}{j}}$.
    Furthermore, $j$ would have only added a set $rec\_vals$ to $\reports{r-1}{j}$ if $rec\_vals\subseteq \values{r-1}{j}$ and $\left|rec\_vals\right|\geq n-t$.
    This also implies that $\left|\values{r-1}{j}\right|\geq n-t$.
    Combining these observations, eventually $i$ will see that the conditions in line~\ref{line::valueCheck} hold, and add $v^r_j$ to $\values{r}{i}$.
    After adding such a value for every nonfaulty node, $i$ will find that $\left|\values{r}{i}\right|\geq n-t$, and broadcast a $\left(``report",\values{r}{i},r\right)$ message.
    It is important to note that if $i$ completes round $r$, then $\left|\reports{r}{i}\right|\geq n-t$.
    As shown above, this must mean that $\left|\values{r}{i}\right|\geq n-t$ and thus $i$ sends a ``report'' message for round $r$.
    
    Observe some nonfaulty node $i$ again.
    If $i$ completes the protocol, then it must have completed round $r$ first.
    Otherwise, $i$ never completes the protocol and therefore receives a $\left(``report",rec\_vals^r_{j},r\right)$ message from every honest node $j$ and adds it to $waiting\_reports$.
    Following similar arguments to the ones above we know that eventually $rec\_vals^r_{j}\subseteq \values{r}{j} \subseteq \values{r}{i}$.
    Furthermore, $j$ only sends a ``report'' message after finding the conditions of line~\ref{line::enoughValues} hold, at which point $\left|\values{r}{j}\right|\geq n-t$.
    This means that for every honest $j$, node $i$ will eventually see that the conditions in line~\ref{line::reportCheck} hold, and add $rec\_vals^r_{j}$ to $\reports{r}{i}$.
    After adding a tuple for each honest node, $i$ sees that $\left|\reports{r}{i}\right|\geq n-t$ and continue to the next line.
    Afterwards $i$ will perform a few local computations, and start the next round. 
\end{proof}

\deltaValidity*
\begin{proof}
    First of all we will more formally state the lemma.
    Define $\exVal{x}$ to be true for every $x\in\RR^m$.  
    Let $G$ be the set of all nonfaulty nodes and let $V$ be the set of all nonfaulty inputs, i.e. $V=\{x_i|i\in G\}$.
    Similarly to the proof of the Correctness property of the protocol, for every nonfaulty $i$ observe $\values{1}{i}$ at a point in time where $i$ doesn't add any more values to it.
    Define $\values{1}{}=\bigcup_{i\in G} \values{1}{i}$.
    Then, for every $v\in\values{1}{}$ there exists some $x\in V$ such that $\dist{v}{x}\leq \diam{V}$.
    
    Observe some $v\in\values{1}{}$.
    Some nonfaulty node $i$ must have added $v$ to its $\values{1}{i}$ set, so it received a
    
    $\left(``value",v,rec\_vals,rec\_reps,1\right)$ broadcast such that $rec\_vals\subseteq \values{0}{}$, $\left|rec\_vals\right|\geq n-t$, 
    
    and $v\in\conv{\Elim{t}{rec\_vals}}$.
    The set $\values{0}{i}$ only contains values which were received in an $``init\_values"$ broadcast.
    A nonfaulty $j$ broadcasts its input $x_j\in V$ in its $``init\_values"$ broadcast.
    Since $rec\_vals\subseteq \values{0}{i}$, it contains up to $t$ values broadcasted by faulty nodes and at least $n-2t\geq t+1$ values $x_j\in V$.
    As stated above, $v\in\conv{\Elim{t}{rec\_vals}}$. 
    Using the convexity of the distance function, to complete the proof it is enough to show that there exists a point $x\in V$ such that for every $v_i\in \Elim{t}{rec\_vals}$, $\dist{x}{v_i}\leq \diam{V}$.
    From this point on, the proof is extremely similar to the proof of \cref{lemma::elimIntersection}.
    
    Recall that $\Elim{t}{rec\_vals}$ consists of $t$ iterations of deleting the pair of furthest-distance points.
    For every $i\in[t]$, denote $(p_i,q_i)$ to be the pair deleted in the $i$'th iteration of $\Elim{t}{rec\_vals}$.
    If $\Elim{t}{rec\_vals}\subseteq V$, then for any point $x\in V$ and any point $v_i\in \Elim{t}{rec\_vals}$, $\dist{x}{v_i}\leq \diam{V}$ by the definition of the diameter.
    Otherwise, there exists some $v_i\in\Elim{t}{rec\_vals}$ such that $v_i\notin V$.
    Assume by way of contradiction that there is no $i$ such that both $p_i,q_i\in V$.
    In that case, at least one distinct value from $rec\_vals\setminus V$ is deleted in each iteration.
    There are $t$ such iterations and at most $t$ such values, so this means that all points not in $V$ have been deleted throughout $\Elim{t}{rec\_vals}$.
    In other words, $\Elim{t}{rec\_vals}\subseteq V$, reaching a contradiction.
    Therefore there exists some $i\in[t]$ such that $p_i,q_i\in V$ are the furthest-distance pair in the $i$'th iteration.
    Setting $x=p_i$, we know that for any remaining point $v_i$ at that moment $\dist{v_i}{x}\leq \dist{p_i}{q_i}\leq \diam{V}$.
    All points in $\Elim{t}{rec\_vals}$ must be points that have remained after $i$ iterations, completing the proof.
\end{proof}


\end{document}